\newenvironment{tightcenter}
 {\parskip=0pt\par\nopagebreak\centering}
 {\par\noindent\ignorespacesafterend}
\newlength{\RoundedBoxWidth}
\newsavebox{\GrayRoundedBox}
\newenvironment{GrayBox}[1]%
   {\setlength{\RoundedBoxWidth}{.8\textwidth}
    \def\boxheading{#1}
    \begin{lrbox}{\GrayRoundedBox}
       \begin{minipage}{\RoundedBoxWidth}%
   }{%
       \end{minipage}
    \end{lrbox}%
    \begin{tightcenter}%
    \begin{tikzpicture}%
       \node(Text)[draw=black!20,fill=white,rounded corners,%
             inner sep=2ex,text width=\RoundedBoxWidth]%
             {\usebox{\GrayRoundedBox}};
        \coordinate(x) at (current bounding box.north west);
        \node [draw=white,rectangle,inner sep=3pt,anchor=north west,fill=white] 
        at ($(x)+(6pt,.75em)$) {\boxheading};
    \end{tikzpicture}
    \end{tightcenter}\vspace{0pt}%
    \ignorespacesafterend
}    
\newenvironment{problem}[2][]{\noindent\ignorespaces%
                                \FrameSep=6pt%
                                \parindent=0pt%
                \vspace*{-.5em}
                \ifthenelse{\isempty{#1}}{%
                  \begin{GrayBox}{#2}%
                }{%
                  \begin{GrayBox}{#2 parameterized by~{#1}}%
                }
                \newcommand\Prob{Output:}%
                \newcommand\Input{Input:}%
                \begin{tabular*}{\textwidth}{@{\hspace{.1em}} >{\itshape} p{1.2cm} p{0.85\textwidth} @{}}%
            }{
                \end{tabular*}%
                \end{GrayBox}%
                \vspace*{-.5em}
                \ignorespacesafterend
            }   
\theoremstyle{plain} 
\newtheorem{theorem}{Theorem}[section]
\newtheorem{lemma}[theorem]{Lemma}
\newtheorem{corollary}[theorem]{Corollary}
\newtheorem{proposition}[theorem]{Proposition}
\tikzstyle{filled vertex} = [{circle,blue,draw,fill=black!50,inner sep=1pt}]
\newtheorem{reduction}{Rule}
\newcommand{\tone}{type-\textsc{i}}
\newcommand{\ttwo}{type-\textsc{ii}}
\title{Polynomial Kernels for Paw-free Edge Modification Problems\thanks{To appear in the proceedings of the 16th Annual Conference on Theory and Applications of Models of Computation (TAMC 2020).}}
\author{Yixin Cao\thanks{Department of Computing, Hong Kong Polytechnic University, Hong Kong, China.  {\tt yixin.cao@polyu.edu.hk, yuping.ke@connect.polyu.hk.} }
  \and
  Yuping Ke\footnotemark[1]
  \and
  Hanchun Yuan\thanks{School of Computer Science and Engineering, Central South University, Changsha, China.}
}
\begin{document}
\maketitle

\begin{abstract}
  Let $H$ be a fixed graph.  Given a graph $G$ and an integer $k$, the $H$-free edge modification problem asks whether it is possible to modify at most $k$ edges in $G$ to make it $H$-free.  Sandeep and Sivadasan (IPEC 2015) asks whether the paw-free completion problem and the paw-free edge deletion problem admit polynomial kernels.  We answer both questions affirmatively by presenting, respectively, $O(k)$-vertex and $O(k^4)$-vertex kernels for them.  This is part of an ongoing program that aims at understanding compressibility of $H$-free edge modification problems.
\end{abstract}

\section{Introduction}
A graph modification problem asks whether one can apply at most $k$ modifications to a graph to make it satisfy certain properties. By modifications we usually mean additions and/or deletions, and they can be applied to vertices or edges.
Although other modifications are also considered, most results in literature are on vertex deletion and the following three edge modifications: edge deletion, edge addition, and edge editing (addition/deletion).

Compared to the general dichotomy results on vertex deletion problems \cite{lewis-80-node-deletion-np,cai-96-hereditary-graph-modification}, the picture for edge modification problems is far murkier.  Embarrassingly, this remains true for the simplest case, namely, $H$-free graphs for fixed graphs $H$.  This paper is a sequel to \cite{cao-18-diamond-editing}, and we are aiming at understanding for which $H$, the $H$-free edge modification problems admitting polynomial kernels.  Our current focus is on the four-vertex graphs; see Figure~\ref{fig:4-vertex} (some four-vertex graphs are omitted because they are complement of ones presented here) and Table~\ref{table:1}.\footnote{Disclaimer: Independent of our work, Eiben et al.~\cite{eiben-19-paw-free-editing} obtain similar results for edge modification problems to paw-free graphs.  They are also able to develop a polynomial kernel for the editing problem.}  We refer the reader to \cite{cao-18-diamond-editing} for background of this research and related work.

\begin{figure}[h]
  \centering
\subfloat[$P_4$]{
    \begin{tikzpicture}[every node/.style={filled vertex},scale=.5]
      \node (a) at (-1,0) {};
      \node (c) at (1,0) {};
      \node (b) at (-1,2) {};
      \node (d) at (1,2) {};
      \draw (d) -- (b) -- (a) -- (c);
    \end{tikzpicture}
}
\quad
\subfloat[$C_4$]{
    \begin{tikzpicture}[every node/.style={filled vertex}, scale=.5]
      \node (a) at (-1,0) {};
      \node (c) at (1,0) {};
      \node (b) at (-1,2) {};
      \node (d) at (1,2) {};
      \draw (a) -- (b) -- (d) -- (c) -- (a);
    \end{tikzpicture}
}
\quad
\subfloat[$K_4$]{
    \begin{tikzpicture}[every node/.style={filled vertex}, scale=.5]
      \node (a) at (-1,0) {};
      \node (d) at (1,0) {};
      \node (b) at (-1,2) {};
      \node (c) at (1,2) {};
      \draw (a) -- (b) -- (c) -- (d) -- (a) -- (c) (b) -- (d);
    \end{tikzpicture}
}
\quad
\subfloat[claw]{
    \begin{tikzpicture}[every node/.style={filled vertex},scale=.25]
      \node (a1) at (-3., 0) {};
      \node (v) at (0, 0) {};
      \node (b1) at (3., 0) {};
      \node (c) at (0,3.5) {};
      \draw[] (a1) -- (v) -- (b1);
      \draw[] (v) -- (c);
    \end{tikzpicture}
}
\quad
\subfloat[paw]{
    \begin{tikzpicture}[every node/.style={filled vertex},scale=.25]
      \node (s) at (0,4) {};
      \node (a1) at (-2,0) {};
      \node (b1) at (2,0) {};
      \node (c) at (0,2) {};
      \draw[] (c) -- (s);
      \draw[] (a1) -- (c) -- (b1) -- (a1);
    \end{tikzpicture}
}
\quad
\subfloat[diamond]{
    \begin{tikzpicture}[every node/.style ={filled vertex}, scale = .5]
      \node (x) at (0, 1) {};
      \node (u) at (-1.5, 0) {};
      \node (v) at (1.5, 0) {};
      \node (y) at (0, -1) {};
      \draw (u) -- (x) -- (v) -- (y) -- (u);
      \draw (x) -- (y);
    \end{tikzpicture}
}

  \caption{Graphs on four vertices (their complements are omitted).}
  \label{fig:4-vertex}
\end{figure}
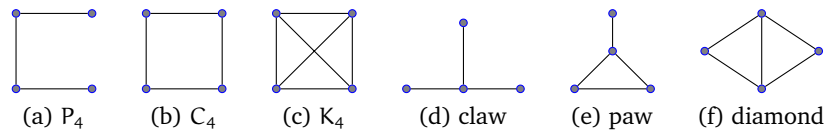

\begin{table}[ht]
  \centering
  \begin{tabular}{c l l l}
    \toprule
    $H$ & completion$\qquad$ & deletion & editing
    \\ \midrule
    $K_4$ & trivial & $O(k^3)$  & $O(k^3)$ \cite{tsur-19-kt-free-kernel}
    \\
    $P_4$ & $O(k^3)$  & $O(k^3)$  & $O(k^3)$ \cite{guillemot-13-kernel-edge-modification}
    \\
    diamond \; & trivial & $O(k^3)$ & $O(k^8)$ \cite{cao-18-diamond-editing}
    \\
    paw & $O(k)$ & $O(k^4)$ [this paper] & unknown
    \\\midrule
    claw & unknown & unknown & unknown
    \\\midrule
    $C_4$ & no  & no  & no \cite{guillemot-13-kernel-edge-modification}
    \\
    \bottomrule
  \end{tabular}
  \caption{The compressibility results of $H$-free edge modification problems for $H$ being four-vertex graphs.  Note that every result holds for the complement of $H$; e.g., the answers are also no when $H$ is $2 K_2$ (the complement of $C_4$).}
  \label{table:1}
\end{table}

In this paper, we show polynomial kernels for both the completion and edge deletion problems when $H$ is the paw (Figure~\ref{fig:4-vertex}(e)).  They answer open problems posed by Sandeep and Sivadasan~\cite{sandeep-15-diamond-free-edge-deletion}.
\begin{theorem}\label{thm:completion}
  The paw-free completion problem has a $38k$-vertex kernel.
\end{theorem}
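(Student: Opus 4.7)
By Olariu's classical structure theorem, a connected graph is paw-free if and only if it is either triangle-free or complete multipartite. The plan is to exploit this dichotomy. Since we only add edges, each connected component of $G$ is contained in exactly one component of the output, and each output component must be triangle-free or complete multipartite---and if it already contains a triangle of $G$, it must be the latter. I would begin by classifying the components of $G$ as (i) triangle-free, (ii) complete multipartite with at least three parts (paw-free with triangles), or (iii) \emph{bad}, containing a paw.

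The first group of reductions handles isolated ``good'' components. A triangle-free or complete-multipartite component $C$ is already paw-free and can always remain as its own component at cost $0$. The only other option is to merge $C$ into a larger component; once that larger component contains a triangle it must end up complete multipartite, which forces, for every vertex of $C$ joining a new part of the merger, essentially all edges to the other parts---an $\Omega(|V(C)|)$ cost per merged piece. Hence once $|V(C)|$ exceeds an $O(k)$ threshold, no completion within budget $k$ can merge $C$, so $C$ may safely be deleted. A complementary reduction then collapses large \emph{twin} classes (vertices with the same neighborhood in a paw-free component) down to a small constant, since the extra copies are interchangeable.

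Next I would bound the bad components. A bad component $D$ ends up inside some complete multipartite component $M$, and the edges added inside $V(D)$ are exactly the between-part non-edges of $D$ relative to $M$'s partition, of which there are at most $k$. To bound $|V(D)|$, I would isolate an ``affected core'' of $D$---the endpoints of those missing between-part edges together with a few representatives of each intended part---and twin-reduce the rest using the same twin rule as above. Since every bad component consumes at least one added edge, there are at most $k$ of them. Summing the three groups after the reductions gives a linear vertex bound, and tight bookkeeping of the constants yields the claimed $38k$.

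The hard part is the analysis of bad components. A bad $D$ can admit many valid complete-multipartite completions $M$ with different part structures, and the reduction rules must work uniformly across that ambiguity. The trickiest subcase is when $D$ has many small or nearly balanced parts, making a merger with an external component potentially cheap; one must verify that the earlier ``detach large good components'' rule does not throw away such cheap mergers. The exact constant $38$ further signals a delicate case analysis in the final counting that I have not attempted to reproduce here; aiming for any $O(k)$ bound would be considerably simpler.
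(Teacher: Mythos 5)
Your plan diverges from the paper in a fundamental way and, as sketched, does not yield an $O(k)$ bound.

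First, the treatment of ``good'' components is more complicated than necessary but not wrong: any component of $G$ that is already paw-free can simply be deleted outright, since a solution $E_+$ to $(G,k)$ restricted to avoid edges incident to that component is still a solution (the component stays paw-free, the rest is an induced subgraph of the paw-free $G+E_+$, and no paw can straddle two components). Your ``expensive merger'' argument is unneeded. The paper's Rule~1 does exactly this deletion.

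The real gap is in the bad components. You observe that each bad component needs at least one added edge, so there are at most $k$ of them, and you propose an ``affected core plus twin reduction'' to bound each at $O(k)$. That gives $O(k^2)$ vertices, not $O(k)$. The paper obtains the linear bound by an entirely different device: a \emph{modulator} $M$ that meets every paw in at least two vertices (not just a hitting set), constructed greedily together with some post-processing. Because $M$ is a 2-hitting set, a charging argument (Lemma~\ref{lem:completion-modulator}(ii)) shows $|M|\le 4k$ in a yes-instance, and the components of $G-M$ are paw-free, hence triangle-free or complete multipartite. The decisive step is an amortization: after the reduction rules are exhausted, every remaining vertex outside $M$ and outside the \ttwo{} triangle-free components witnesses an amortized cost of at least $1/32$ in any solution (Lemmas~\ref{lem:edge-almost-dominating}, \ref{lem:reduced-tone}, \ref{lem:reduced-cm}), which bounds their number by $32k$. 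Your proposal contains no analogue of this per-vertex charging; ``tight bookkeeping of the constants'' cannot substitute for it because without it the count is quadratic in $k$. You would also need something like Lemma~\ref{lem:false-twins} to justify the twin-reduction rigorously across the ambiguity of possible target partitions that you correctly flag as ``the hard part''; the paper resolves that ambiguity precisely via the modulator structure (Propositions~\ref{lem:complete-to-triangle-free} and \ref{lem_v-part}).
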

\begin{theorem}\label{thm:deletion}
 The paw-free edge deletion problem has an $O(k^4)$-vertex kernel.
\end{theorem}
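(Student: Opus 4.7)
The plan rests on the classical structural characterization of paw-free graphs due to Olariu: a connected graph is paw-free if and only if it is triangle-free or a complete multipartite graph. I would begin with a few safe polynomial-time reduction rules --- most importantly, a sunflower-lemma rule that forces the deletion of any edge sitting in more than $k$ almost-disjoint induced paws (such an edge is in every solution), together with a twin/module reduction that collapses or bounds the number of vertices sharing an identical closed neighborhood. Both are standard and polynomial-time implementable.

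After the safe reductions, I would greedily pack vertex-disjoint induced paws. If the packing exceeds $k$, return a trivial NO-instance; otherwise, the set $S$ of vertices appearing in the packing has $|S| \le 4k$ and $G - S$ is paw-free. By Olariu's theorem, every connected component of $G - S$ is triangle-free or complete multipartite. Any component of $G - S$ with no edges to $S$ is already a paw-free component of $G$ and contributes nothing to any solution, so it may be removed outright; it therefore remains to bound the total size of the components that do touch $S$.

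To reach the $O(k^4)$ bound, I would classify every vertex of $V(G) \setminus S$ by (i) the component $C$ of $G - S$ it lies in, and (ii) its neighborhood in $S$. Within a single component $C$, two vertices that share both the same $S$-neighborhood and the same local role in $C$ --- the same part when $C$ is complete multipartite, or the same $C$-neighborhood when $C$ is triangle-free --- are interchangeable in every induced paw, and a standard exchange argument against a fixed optimal solution shows that one may keep at most $O(k)$ representatives per type. Counting components, types per component, and representatives per type, I expect a product of the form $O(k) + O(k) \cdot O(k) \cdot O(k^2) = O(k^4)$ surviving vertices.

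The main obstacle is the exchange step inside a complete multipartite component with many parts. Paws involving such a component arise in several qualitatively different configurations --- the triangle can lie entirely inside $C$ with a pendant in $S$, span two parts of $C$ and one vertex of $S$, or sit entirely in $S$ with a pendant in $C$ --- and a single reduction must preserve feasibility against all of them simultaneously. Moreover, the triangle-free and complete-multipartite cases genuinely demand different arguments: in a triangle-free component every paw must borrow at least two triangle vertices from $S$, while in a complete multipartite component the triangles can sit almost anywhere, forcing much finer control over which vertices of $C$ may coexist under a common $S$-neighborhood.
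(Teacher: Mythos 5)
Your high-level plan --- Olariu's characterization, a sunflower-style rule for an edge in $k+1$ paws pairwise sharing only that edge, a small modulator, classification of the remaining vertices by their neighborhood in the modulator, and an exchange argument keeping $O(k)$ representatives per class --- matches the paper's outline, and the sunflower rule is literally the paper's Rule~\ref{rdt_2}. But two of your choices break the argument. First, you pack \emph{vertex}-disjoint paws, which only guarantees that every paw meets $S$ in one vertex. The paper packs \emph{edge}-disjoint paws precisely to get the stronger property that every paw meets the modulator $M$ in at least two vertices, and this is load-bearing: the structural facts you would need for the exchange step --- that a triangle-free component forming a triangle with some $v\in M$ is complete bipartite with $v$ adjacent to all of it (Proposition~\ref{lem:complete-to-triangle-free}), that the non-neighbours of $v\in M$ inside a complete multipartite component form exactly one part (Proposition~\ref{lem_v-part}), and hence that each such part is a false twin class of all of $G$ (Corollary~\ref{cor:false-twins}) --- are all proved by exhibiting a paw with only one vertex in $M$, which is forbidden only under the two-vertex intersection property. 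With a mere hitting set, a paw can have three of its four vertices in a single component of $G-S$ and one in $S$, and this clean structure collapses.

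Second, classifying vertices by their full neighborhood in $S$ gives up to $2^{\Theta(k)}$ classes, so keeping $O(k)$ representatives per class is not a polynomial bound. The missing idea is that a paw has only four vertices, so any surviving paw containing a deleted vertex $x$ has at most three vertices in $M$; it therefore suffices, for every $S'\subseteq M$ with $|S'|\le 3$ and every possible trace on $S'$, to keep $O(k)$ representatives with that trace. This is exactly what Rules~\ref{rule:type-1}(i) and~\ref{rule:type-2}(i) do, and it is what yields $\binom{4k}{3}\cdot O(1)\cdot O(k)=O(k^4)$; without it there is no route from neighborhood-type classification to a polynomial kernel. (Complete multipartite components are handled differently still: the false-twin structure lets one cap each part at $k+1$ vertices and the number of parts at $O(k)$, which you did not address for the configuration where the triangle of a paw lies entirely inside the component.) So the skeleton is right, but the choice of packing and the size-$\le 3$ restriction are both essential and both absent.
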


It is easy to see that each component of a paw-free graph is either triangle-free or a complete multipartite graph with at least three parts~\cite{olariu-88-paw-fee}.  This simple observation motivates us to take the modulator approach.  Here by modulator we mean a set of vertices that intersect every paw of the input graph by at least two vertices.  Note that the deletion of all the vertices in the modulator leaves the graph paw-free.  We then study the interaction between the modulator $M$ and the components of $G - M$, which are triangle-free or complete multipartite.
We use slightly different modulators for the two problems under study.

\section{Paw-free graphs}
All graphs discussed in this paper are undirected and simple.  A graph $G$ is given by its vertex set $V(G)$ and edge set $E(G)$.
For a set $U\subseteq V(G)$ of vertices, we denote by $G[U]$ the subgraph induced by $U$, whose vertex set is $U$ and whose edge set comprises all edges of $G$ with both ends in $U$.  We use $G - X$, where $X\subseteq V(G)$, as a shorthand for $G[V(G)\setminus X]$, which is further shortened as $G - v$ when $X = \{v\}$.
For a set $E_+$ of edges, we denote by $G + E_+$ the graph obtained by adding edges in $E_+$ to $G$,---its vertex set is still $V(G)$ and its edge set becomes $E(G)\cup E_+$.  The graph $G - E_-$ is defined analogously.
A paw, as shown in Figure~\ref{fig:4-vertex}(e), is a four-vertex graph with one degree-three vertex and two degree-two vertices.

\begin{problem}{Paw-free Completion}
  \Input & {A graph $G$ and a nonnegative integer $k$.}
  \\
  \Prob  & {Is there a set $E_+$ of at most~$k$ edges such that $G + E_+$  is paw-free?}
\end{problem}

\begin{problem}{Paw-free Edge Deletion}
  \Input & {A graph $G$ and a nonnegative integer $k$.}
  \\
  \Prob  & {Is there a set $E_-$ of at most~$k$ edges such that $G - E_-$  is paw-free?}
\end{problem}

An \emph{independent set} of a graph is a vertex set of pairwise nonadjacent vertices.
For a positive integer $k$, a \textit{$k$-partite} graph is a graph whose vertices  can be partitioned into $k$ different independent sets, called parts, and
a $k$-partite graph is \textit{complete} if all the possible edges are present, i.e.,  there is an edge between every pair of vertices from different parts.
A\textit{ complete multipartite graph} is a graph that is complete $k$-partite for some $k\ge 3$.  Note that here we exclude complete bipartite graphs for convenience.

\begin{proposition}\label{lem:paw-free}
  A graph $G$ is paw-free if and only if every component of $G$ is triangle-free or complete multipartite.
\end{proposition}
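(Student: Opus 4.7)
The plan is to prove the two directions separately. For the ``if'' direction, I would observe that a paw contains a triangle, so no component of a triangle-free graph contains an induced paw, and that in a complete multipartite graph with parts $V_1, \ldots, V_k$ ($k \ge 3$), any four vertices containing a triangle are structured as follows: the three triangle vertices must lie in three distinct parts, and the fourth vertex is adjacent either to all three (if it lies in a fourth part) or to exactly two of them (if it shares a part with one of the triangle's vertices). In either case the induced subgraph is a $K_4$ or a diamond, never a paw.

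For the ``only if'' direction, I would fix a paw-free graph $G$ and a component $C$ that contains a triangle, and show $C$ is complete multipartite with at least three parts. The fundamental observation is that for any triangle $T \subseteq V(C)$ and any vertex $x \in V(C) \setminus T$ one must have $|N(x) \cap T| \ne 1$, since otherwise $T \cup \{x\}$ induces a paw. The plan is to use this to establish, in order, (i) every vertex of $C$ lies in some triangle, and (ii) non-adjacency is transitive on distinct vertices of $C$. Granted (ii), the equivalence classes of the relation ``$u = v$ or $uv \notin E$'' partition $V(C)$ into independent sets with all edges present between different classes---exactly the complete multipartite structure---and the initial triangle forces at least three classes. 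For (i), I would induct on the distance to the initial triangle: if a neighbour $y$ of $x$ already lies in a triangle $T'$, the fundamental observation forces $|N(x) \cap T'| \ge 2$, so $x$ lies in a triangle with $y$ and another vertex of $T'$.

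The main obstacle will be the transitivity in step (ii): given distinct $u, v, w$ with $uv, vw \notin E$, I have to rule out $uw \in E$. By (i), the edge $uw$ lies in a triangle $\{u, w, z\}$; applying the fundamental observation to $v$ and this triangle forces $v$ to be non-adjacent to $z$ as well, since any adjacency from $v$ to $\{u,w,z\}$ would entail at least two, contradicting $uv, vw \notin E$. I would then invoke connectedness of $C$ and take a shortest path from $v$ to $\{u,w,z\}$. The penultimate vertex on this path is adjacent to at least two vertices of $\{u,w,z\}$, producing a new triangle one step closer to $v$; applying the fundamental observation once more to the preceding vertex on the path (or to $v$ itself, when the path has length two) yields a vertex adjacent to only one vertex of this new triangle, violating the observation and hence the minimality of the path. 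This is the contradiction that completes transitivity and therefore the proof.
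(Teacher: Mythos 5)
The paper states this proposition without proof, citing Olariu's 1988 note, so there is no in-paper argument to compare yours against; I therefore assess your proposal on its own. Your proof is correct. The ``if'' direction is routine (a paw contains a triangle, and in a complete multipartite graph any four vertices spanning a triangle induce $K_4$ or a diamond). For the ``only if'' direction, your ``fundamental observation'' that a vertex outside a triangle sees either zero or at least two of its vertices is exactly the right tool; you use it to propagate triangles across the component and then to show that non-adjacency is transitive, which is the standard characterization of complete multipartite graphs, and the shortest-path contradiction in step~(ii) goes through as you describe. One small point to tighten: you invoke ``by (i), the edge $uw$ lies in a triangle,'' but (i) as stated only says every \emph{vertex} lies in a triangle. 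You need one more line: if $u$ lies in a triangle $T'$ with $w\notin T'$, then the observation applied to $w$ (which is adjacent to $u\in T'$) forces $w$ to be adjacent to a second vertex $z'\in T'$, so $\{u,w,z'\}$ is a triangle containing $uw$. With that line inserted, the argument is complete.
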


In other words, if a connected paw-free graph contains a triangle, then it is necessarily a complete multipartite graph.  Another simple fact is on the adjacency between a vertex and a (maximal) clique in a paw-free graph.

\begin{proposition} \label{pps_v-clique}
Let $K$ be a clique in a paw-free graph.  If a vertex $v$ is adjacent to $K$, then $|K\setminus N[v]| \leq 1$.
\end{proposition}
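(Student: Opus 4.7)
The plan is a short proof by contradiction. Suppose that $|K \setminus N[v]| \geq 2$, and pick two distinct vertices $u_1, u_2 \in K$ with $v u_1, v u_2 \notin E(G)$. By hypothesis, $v$ is adjacent to some vertex of $K$; call it $w$. Note that $w \neq u_1$ and $w \neq u_2$ (since $v$ is adjacent to $w$ but not to $u_1$ or $u_2$), so $\{v, w, u_1, u_2\}$ has four distinct vertices.

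Now I would examine the induced subgraph on $\{v, w, u_1, u_2\}$. Because $K$ is a clique and $w, u_1, u_2 \in K$, the three edges $w u_1$, $w u_2$, $u_1 u_2$ are all present, so these three vertices form a triangle. Together with the edge $v w$ and the non-edges $v u_1$ and $v u_2$, the induced subgraph has degree sequence $3, 2, 2, 1$ (with $w$ the degree-three vertex, $u_1, u_2$ the degree-two vertices, and $v$ the pendant). By definition, this is exactly an induced paw, contradicting the assumption that $G$ is paw-free.

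The only things to be careful about are that $v \notin K$ (otherwise the statement is trivial since then $v$ is adjacent to every other vertex of $K$, so $K \setminus N[v] = \emptyset$) and that the three vertices $w, u_1, u_2$ are pairwise distinct so that the triangle really exists. Both points are immediate from the choice of $u_1, u_2$ as non-neighbors of $v$ and $w$ as a neighbor of $v$. There is no real obstacle here; this proposition is just a convenient reformulation of the forbidden-induced-subgraph condition that will be invoked repeatedly in the modulator analysis later in the paper.
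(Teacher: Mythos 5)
Your proof is correct and takes essentially the same approach as the paper: pick two non-neighbors of $v$ in $K$ and one neighbor $w \in K \cap N(v)$, and observe that these three vertices together with $v$ induce a paw. The only difference is that you spell out the degree-sequence check and the edge case $v \in K$ explicitly, whereas the paper leaves these to the reader.
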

\begin{proof}
  It is vacuously true when $|K| \le 2$.  For $|K| \ge 3$, any two vertices in $K\setminus N[v]$, a vertex in $K\cap N(v)$, and $v$ itself induce a paw if $|K\setminus N[v]| >1$.
\end{proof}

A set $M \subseteq V(G)$ of vertices is a \emph{modulator} of a graph $G$ if every paw in $G$ intersects $M$ by at least two vertices.  Note that $G- M$ is paw-free.
The following three propositions characterize the interaction between the modulator $M$ and the components of $G - M$.

\begin{proposition}\label{M_partitionbig}
  Let $M$ be a modulator of $G$.  If a vertex $v\in M$ forms a triangle with some component $C$ of $G- M$, then all the neighbors of $v$ are in $M$ and $C$.
\end{proposition}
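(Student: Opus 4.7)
The plan is to argue by contradiction: suppose some neighbor $u$ of $v$ lies outside $M\cup V(C)$, and exhibit an induced paw whose intersection with $M$ is only $\{v\}$, contradicting the assumption that $M$ is a modulator.

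First I would fix two vertices $x,y\in V(C)$ such that $\{v,x,y\}$ induces the triangle guaranteed by the hypothesis; in particular $vx,vy,xy\in E(G)$. Then, for the hypothetical neighbor $u\notin M\cup V(C)$, I observe that $u$ belongs to some component $C'$ of $G-M$ distinct from $C$. Since $x,y,u$ all lie in $V(G)\setminus M$ but $u$ and $\{x,y\}$ are in different components of $G-M$, neither $ux$ nor $uy$ is an edge of $G$.

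Next I would check the edges inside $\{v,x,y,u\}$: we have $vx,vy,xy,vu\in E(G)$ and $ux,uy\notin E(G)$. So $v$ has degree three in this induced subgraph, $x$ and $y$ each have degree two, and $u$ has degree one; this is exactly a paw. But the only vertex of this paw lying in $M$ is $v$ itself (since $x,y\in V(C)$ and, by assumption, $u\notin M$), so the paw meets $M$ in a single vertex. This contradicts the defining property of a modulator, and hence every neighbor of $v$ must lie in $M\cup V(C)$.

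There is essentially no obstacle here; the only choice to make is which four vertices to use as the forbidden induced paw, and that choice is forced once the separating pair $x,y$ from the triangle and the alleged outside neighbor $u$ are placed together with $v$.
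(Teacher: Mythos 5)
Your proof is correct and is essentially the same as the paper's: both take the two triangle vertices in $C$, assume for contradiction a neighbor of $v$ in a different component of $G-M$, note the absence of edges across components, and exhibit the resulting paw meeting $M$ in only $v$. The only difference is notational.
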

\begin{proof}
  Suppose that  $u v w$ is a triangle with $u, w\in C$.  If $v$ is also adjacent to vertex $x$ in a different component $C'$ of $G - M$, then $\{u, v, w, x\}$ induces a paw, contradicting the definition of the modulator.
\end{proof}
In other words, if a vertex $v$ in $M$ forms a triangle with a component of $G - M$, then $v$ is a ``private'' neighbor of this component.  As we will see, these components (forming triangles with a single vertex from $M$) are the focus of our algorithms.

\begin{proposition}\label{lem:complete-to-triangle-free}
    Let $G$ be a graph and $M$ a modulator of $G$.  If a vertex $v\in M$ forms a triangle with an edge in a triangle-free component $C$ of $G - M$, then (i) $v$ is adjacent to all the vertices of $C$; and (ii) $C$ is complete bipartite.
\end{proposition}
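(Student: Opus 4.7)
The plan is to prove (i) and (ii) in turn, in each case deriving a contradiction from a small forbidden paw that lives mostly in $C$ and uses $v$ as its unique $M$-vertex.

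For (i), I would first establish a propagation lemma: whenever $a,b$ are two vertices of $N(v)\cap C$ that are adjacent, every neighbor of $a$ in $C$ also lies in $N(v)$. To see this, let $c\in N_C(a)$ with $c\neq b$; triangle-freeness of $C$ rules out $c\sim b$, so if in addition $c\not\sim v$ then $\{v,a,b,c\}$ induces a paw with apex $a$ (edges $va,vb,ab,ac$). Since $v\in M$ and $\{a,b,c\}\subseteq V(G)\setminus M$, this paw intersects $M$ only in $v$, contradicting the modulator property. Starting from the given adjacent pair $u,w\in N(v)\cap C$ and iterating this lemma along a BFS through $C$, the connectedness of $C$ forces every vertex of $C$ to be adjacent to $v$.

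With (i) in hand, a symmetric observation is that for any edge $ab$ in $C$ and any third vertex $x\in C$, $x$ must be adjacent to at least one of $a,b$: otherwise $\{v,a,b,x\}$ induces a paw with $v$ as the degree-three vertex and $x$ as the pendant, again contradicting the modulator condition. Now fix an edge $ab\in E(C)$ and partition $V(C)$ as $A\sqcup B$ with $A=\{a\}\cup(N_C(b)\setminus\{a\})$ and $B=\{b\}\cup(N_C(a)\setminus\{b\})$. Triangle-freeness gives $N_C(a)\cap N_C(b)=\emptyset$, so the partition is well-defined, and the previous observation shows that it covers every vertex of $C$. Triangle-freeness also forces $A$ and $B$ each to be independent, because two common neighbors of $b$ (respectively $a$) cannot themselves be adjacent.

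The remaining step for (ii) is to show that every pair $x\in A\setminus\{a\}$, $y\in B\setminus\{b\}$ is adjacent, and this is the step I expect to be the most delicate: the ``boundary'' pairs (involving $a$ or $b$) fall out directly from the definitions of $A,B$, but the generic case needs a second paw argument. Specifically, such $x\in N_C(b)$ and $y\in N_C(a)$ satisfy $x\not\sim a$ and $y\not\sim b$ by triangle-freeness; if additionally $x\not\sim y$, then $\{v,a,x,y\}$ has edge set $\{va,vx,vy,ay\}$, a paw centered at $v$ with pendant $x$, once more violating that $M$ meets every paw in at least two vertices. Hence $x\sim y$, so $C$ is the complete bipartite graph $K_{|A|,|B|}$, which finishes (ii).
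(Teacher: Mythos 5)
Your proof is correct and follows essentially the same strategy as the paper: for (i), both arguments identify a paw $\{v,a,b,c\}$ at the boundary between $N(v)\cap C$ and $C\setminus N(v)$ (you frame this as a BFS propagation, the paper as a last-nonadjacent-vertex-on-a-path argument, but the paw found is identical up to relabeling); for (ii), both use the observation that three vertices of $C$ spanning exactly one edge would form a paw with $v$, and you merely spell out the routine deduction from this observation to complete-bipartiteness that the paper leaves implicit.
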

\begin{proof}
  Let $u v w$ be the triangle with $u, w\in C$.  Suppose that there is a vertex in $C$ nonadjacent to $v$.  We can find a path in $C$ from this vertex to $u$, and let $x$ be the last vertex on this path nonadjacent to $v$.  If $x$ is the neighbor of $u$ on this path, we extend the path by appending $w$ after $u$; otherwise, at least two immediate successors of $x$ on this path are adjacent to $v$.  Therefore, we always have three consecutive vertices $x, u', w'$, of which only $x$ is nonadjacent to $v$.  Since $C$ is triangle-free, $x w'\not\in E(G)$.  But then $\{x, u', v, w'\}$ induces a paw with only one vertex in $M$, a contradiction to the definition of the modulator.  Therefore, every vertex in $C$ is adjacent to $v$.

  For the second assertion, suppose that there are three vertices in $C$ with only one edge among them.  Then they induce a paw with $v$, again contradicting the definition of the modulator.  Thus, $C$ is complete bipartite.
\end{proof}

We say that a triangle-free component of $G - M$ is of type \textsc{i} if it forms a triangle with some vertex in $M$, or type \textsc{ii} otherwise.  By Proposition~\ref{lem:complete-to-triangle-free}, for each \tone{} triangle-free component, all its vertices have a common neighbor in $M$.
A component is trivial if it consists of a single vertex. 
Note that all trivial components of $G- M$ are \ttwo{} trianglle-free components.

\begin{proposition}\label{lem_v-part}
  Let $G$ be a graph and $M$ a modulator of $G$.  For any complete multipartite component $C$ of $G - M$ and vertex $v\in M$ adjacent to $C$, the set of vertices in $C$ that are nonadjacent to $v$ is either empty or precisely one part of $C$.
\end{proposition}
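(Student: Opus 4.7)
The plan is to assume that $S := V(C) \setminus N(v)$ is neither empty nor equal to a single part of $C$, and to exhibit a paw that meets $M$ in exactly the single vertex $v$, contradicting the definition of a modulator. Denote the parts of $C$ by $V_1, \ldots, V_k$ with $k \geq 3$; since $v$ is adjacent to $C$, I would fix at the outset some $u \in N(v) \cap C$ and let $V_c$ be its part.

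First I would show that $S$ meets at most one part. Suppose instead that $x \in V_a \cap S$ and $y \in V_b \cap S$ with $a \neq b$. The easy case is $c \notin \{a, b\}$: then $u, x, y$ lie in three different parts and are pairwise adjacent, so $\{v, u, x, y\}$ induces a paw (triangle $uxy$ with pendant $vu$). The delicate case is $c \in \{a,b\}$, say $c = a$, because now $u$ and $x$ share a part and $\{v, u, x, y\}$ induces only a path. To resolve it I would use $k \geq 3$ to pick $z$ in a third part $V_l$ and split on its adjacency to $v$: if $z \in N(v)$ then $\{v, z, x, y\}$ induces a paw; if $z \notin N(v)$ then $\{v, u, y, z\}$ does, built on the triangle $uyz$ with pendant $vu$. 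In every case only $v$ lies in $M$.

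Next, with $S \subseteq V_j$ established for some single part $V_j$, I would prove $V_j \subseteq S$. Suppose otherwise that there exist $w \in V_j \cap S$ and $w' \in V_j \setminus S$; pick any $z$ in a part other than $V_j$, which exists because $k \geq 3$, and note that $z \in N(v)$ by the first step. Then $\{v, w', z\}$ forms a triangle, $w$ is adjacent to $z$ but to neither $v$ nor $w'$ (the latter because they share the part $V_j$), so $\{v, w, w', z\}$ induces the desired paw.

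I expect the main obstacle to be the sub-case $c \in \{a, b\}$ of the first step: the natural quadruple $\{v, u, x, y\}$ only spans a path there, so one must reach into a third part---precisely where the hypothesis $k \geq 3$ enters---and perform a further case split on whether the newly chosen vertex is adjacent to $v$. The second step, by contrast, is a direct one-shot construction once the first step makes every vertex outside $V_j$ a neighbor of $v$.
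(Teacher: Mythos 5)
Your proof is correct and takes essentially the same approach as the paper: both arguments exhibit a paw containing $v$ as its only modulator vertex, and both exploit the fact that $C$ has at least three parts to find the needed third endpoint of a triangle. The paper organizes the sweep by ordering the parts and using a single triangle $\{u,w,x\}$ with $w$ a fixed non-neighbor, whereas you phrase it as a two-step contradiction ($S$ meets at most one part, then fills that part) with a finer case split, but the paw constructions are the same in spirit.
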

\begin{proof}
  Suppose that the parts of $C$ are $U_1$, $\ldots$, $U_p$.  We have nothing to prove if all the vertices in $C$ are adjacent to $v$.  In the following we assume that, without loss of generality, $v$ is adjacent to $u\in U_1$ and nonadjacent to $w\in U_p$.  We need to argue that $v$ is adjacent to all vertices in the first $p-1$ parts but none in the last part.
  Any vertex $x\in U_i$ with $1 < i < p$ makes a clique with $u$ and $w$.  It is adjacent to $v$ by
  the definition of the modulator ($\{u, v, w, x\}$ cannot induce a paw) and Proposition~\ref{pps_v-clique}.
  Now that $v$ is adjacent to some vertex from another part ($p \ge 3$), the same argument implies $U_1\subseteq N(v)$.  To see $U_p\cap N(v) = \emptyset$, note that a vertex $w'\in U_p\cap N(v)$ would form a paw together with $u, v, w$, contradicting the definition of the modulator.
\end{proof}

A \emph{false twin class} of a graph $G$ is a vertex set in which every vertex has the same open neighborhood.  It is necessarily independent.  The following is immediate from Proposition~\ref{lem_v-part}.
\begin{corollary}\label{cor:false-twins}
  Let $M$ be a modulator of $G$, and $C$ a complete multipartite component of $G - M$.  Each part of $C$ is a false twin class of $G$.
\end{corollary}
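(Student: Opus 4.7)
The plan is to verify the definition directly: fix an arbitrary part $U_i$ of $C$ and show that any two vertices $u, u' \in U_i$ satisfy $N(u) = N(u')$ in $G$. Since $G$ has only the three ``regions'' $U_i$, $V(C)\setminus U_i$, and $V(G)\setminus V(C)$ (the latter split further into $M$ and the other components of $G-M$), it suffices to check equality on each region separately.

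First, I would handle neighborhoods inside $C$. Because $C$ is complete multipartite with $U_i$ one of its parts, every vertex of $U_i$ is adjacent to precisely $V(C)\setminus U_i$ and to none of $U_i$ itself. So $N(u)\cap V(C) = N(u')\cap V(C) = V(C)\setminus U_i$, which also witnesses that $U_i$ is independent. Next, since $C$ is an entire connected component of $G-M$, there are no edges from $V(C)$ to $V(G)\setminus(V(C)\cup M)$, so $N(u)$ and $N(u')$ agree (trivially, both empty) on the remaining components of $G-M$.

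The only nontrivial region is $M$, and this is where Proposition~\ref{lem_v-part} does all the work. For each $v\in M$, either $v$ has no neighbor in $C$ (in which case $v\notin N(u)\cup N(u')$), or $v$ is adjacent to some vertex of $C$, in which case the proposition says that the non-neighbors of $v$ in $C$ form either the empty set or exactly one part $U_j$. In the first case $v$ is adjacent to both $u$ and $u'$; in the second case $v$ is adjacent to both or nonadjacent to both, depending on whether $j=i$. Either way, $v\in N(u)\iff v\in N(u')$, so $N(u)\cap M = N(u')\cap M$.

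Combining the three regions gives $N(u)=N(u')$, so $U_i$ is a false twin class of $G$. There is no real obstacle here; the work has already been done in Proposition~\ref{lem_v-part}, and the corollary is just the observation that ``nonadjacent set is empty or a single part'' means that membership in $N(v)$ is constant on each part.
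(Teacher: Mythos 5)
Your proof is correct and takes the same route the paper intends: the paper states the corollary is ``immediate from Proposition~\ref{lem_v-part}'' without writing it out, and your argument is precisely the routine verification of that claim, partitioning $V(G)$ into $U_i$, $V(C)\setminus U_i$, $M$, and the rest, and invoking Proposition~\ref{lem_v-part} only for the $M$ part.
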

The preservation of false twins by all minimum paw-free completions may be of independent interest.
\begin{lemma}\label{lem:false-twins}
  Let $G$ be a graph and let $E_+$ be a minimum set of edges such that $G + E_+$ is paw-free.  Any false twin class of $G$ remains a false twin class of $G + E_+$.
\end{lemma}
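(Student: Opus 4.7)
The plan is to argue by contradiction using the minimality of $E_+$. Fix $u, v \in T$, set $N := N_G(u) = N_G(v)$, $A^* := N_{G+E_+}(u) \setminus (N \cup \{v\})$, and $B^* := N_{G+E_+}(v) \setminus (N \cup \{u\})$. A routine computation shows that $u, v$ are false twins in $G + E_+$ if and only if $uv \notin E_+$ and $A^* = B^*$; I will establish both conditions by a case analysis on where $u, v$ sit in $G + E_+$.

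If $u, v$ lie in different components of $G + E_+$, then $N = \emptyset$ (any common neighbor would witness the same component), so $u, v$ are isolated in $G$; since an isolated vertex lies in no induced paw, deleting all of $E_+$'s edges incident to $u$ or $v$ yields a paw-free completion, so minimality forces $A^* = B^* = \emptyset$ and $uv \notin E_+$. If instead they share a triangle-free component $C$ (using Proposition~\ref{lem:paw-free}), then for any violating edge of $E_+$---either $uv$ itself, or some $uw$ with $w \in A^* \setminus B^*$, or symmetrically some $vw'$ with $w' \in B^* \setminus A^*$---removing it from $E_+$ cannot create a new paw: the only 4-vertex graph obtained by adding a single edge to a paw is a diamond, whose triangle would have to sit inside $C$, contradicting triangle-freeness. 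Minimality again yields the desired equalities.

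The substantive case is when $C$ is a complete multipartite component with parts $U_1, \ldots, U_p$ ($p \ge 3$) containing $u \in U_i$ and $v \in U_j$ with $i \ne j$. Simply deleting $uv$ fails here: when $|U_i|, |U_j| \ge 2$, the graph $C - uv$ is neither triangle-free nor complete multipartite. Instead, I perform a global re-partitioning. Define $E'_+ := (E_+ \setminus \{vx : x \in U_i\}) \cup \{vy : y \in U_j \setminus \{v\}\}$. All removed edges lie in $E_+$: for $vx \in E(G)$ would force $ux \in E(G)$ by the false-twin property, contradicting $u \not\sim x$ inside $C$. No added edge lies in $E(G)$, since $v$ and $y$ share the part $U_j$ and hence are nonadjacent in $C$. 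A direct verification shows that $G + E'_+$ agrees with $G + E_+$ outside $C$ and replaces $C$ by the complete multipartite graph on the new partition $(U_i \cup \{v\},\, U_j \setminus \{v\},\, U_k)_{k \ne i, j}$ (collapsing to a paw-free complete bipartite graph when $U_j = \{v\}$ and $p = 3$). The size change is $-|U_i| + |U_j| - 1$, and the symmetric move on $u$ yields $-|U_j| + |U_i| - 1$; since these sum to $-2$, at least one is strictly negative, contradicting minimality. Hence $i = j$, so $u$ and $v$ share a part of $C$ and have the common neighborhood $V(C) \setminus U_i$.

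The main obstacle is precisely this complete multipartite case. The naive attempt of deleting the single edge $uv$ from $E_+$ does not work, so the proof hinges on identifying the correct re-partitioning move: one that simultaneously deletes the $|U_i|$ edges from $v$ to $U_i$ and inserts the $|U_j| - 1$ edges from $v$ into $U_j \setminus \{v\}$, preserving paw-freeness while strictly reducing $|E_+|$ for at least one of the two symmetric directions.
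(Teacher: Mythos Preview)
Your argument is correct, and the decisive step---moving $v$ from its part $U_j$ into $u$'s part $U_i$ (or vice versa) and observing that the two symmetric size changes $-|U_i|+|U_j|-1$ and $-|U_j|+|U_i|-1$ sum to $-2$---is exactly the exchange argument the paper uses (the paper simply assumes $|P_u|\ge|P_v|$ up front and does one direction). The paper packages the easy cases more tersely: it first restricts to the component of $G$ containing the twin class and notes that if that component is already paw-free then minimality forces $E_+$ to be empty there, so only the complete-multipartite case needs work; you instead split on the component type in $G+E_+$ and dispatch the triangle-free case via the ``a removed edge would have completed a diamond'' observation---a slightly longer route to the same conclusion, but not a genuinely different proof.
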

\begin{proof}
  Let $I$ be a false twin class of $G$.  If $I$ consists of the isolated vertices, then they remain isolated in $G + E_+$ because $E_+$ is minimum.   Hence it remains false twin class.  In the rest we may assume that $G$ is connected; otherwise we consider the component of $G$ that contains $I$.  Moreover, if $G$ is paw-free, then $E_+$ is empty and the claim holds vacuously.  Now that $G$ is connected and contains a paw, by Proposition~\ref{lem:paw-free}, $G+E_+$ is complete multipartite.  It suffices to show that vertices in $I$ are in the same part of $G+E_+$.  Suppose for contradiction that $u, v\in I$ are in different parts of $G+E_+$; let $P_u$ and $P_v$ be the parts of $G+E_+$ that contain $u$ and $v$ respectively, and assume without loss of generality that $|P_u| \ge |P_v|$.

  Observe that every vertex in $P_u\setminus\{u\}$ is nonadjacent to $u$ in $G+ E_+$, hence nonadjacent to $u$ in $G$, which further implies that it is nonadjacent to $v$ in $G$ because $u$ and $v$ are false twins.  Then $G + (E_+\setminus E_1) \cup E_2$, where $E_1 = \{v x \mid x \in P_u\}$ and $E_2 = \{v x \mid x \in P_v \setminus \{v\}\}$, is also paw-free: Parts $P_u$ and $P_v$ are replaced by $P_u\cup \{v\}$ and $P_v\setminus \{v\}$, while the others remain unchanged.  Since $|E_1| = |P_u| > |P_v| - 1 = |E_2|$, this contradicts that $E_+$ is minimum.  This concludes the proof.
\end{proof}
Since all of our reduction rules are very simple and obvious doable in polynomial time, we omit the analysis of their running time.

\section{Paw-free completion}\label{sec:completion}

The safeness of our first rule is straightforward. 
\begin{reduction}\label{rule:paw-free-components}
If a component of $G$ contains no paw,  delete it.
\end{reduction}

Behind our kernelization algorithm for the paw-free completion problem is the following simple and crucial observation.  After Rule~\ref{rule:paw-free-components} is applied, each remaining component of $G$ contains a paw, hence a triangle, and by Proposition~\ref{lem:paw-free}, we need to make it complete multipartite.
We say that a vertex $v$ and an edge $x y$ \emph{dominate} each other if at least one of $x$ and $y$ is adjacent to $v$.  Note that an edge dominates, and is dominated by, both endpoints of this edge.
Every edge in a complete multipartite graph dominates all its vertices, and hence in a yes-instance, every edge ``almost'' dominates vertices in the component.

\begin{lemma}\label{lem:edge-almost-dominating}
  Let $G$ be a connected graph containing a paw and $uv$ an edge in $G$.  We need to add at least $|V(G) \setminus N[\{u, v\}]| $ edges incident to $u$ or $v$ to $G$ to make it paw-free.
\end{lemma}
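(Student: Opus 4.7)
The plan is to reduce to the structure theorem for connected paw-free graphs (Proposition~\ref{lem:paw-free}) and then exploit the complete multipartite structure. First I would fix any set $E_+$ of edges such that $G + E_+$ is paw-free, and observe that $G + E_+$ remains connected (adding edges preserves connectedness) and still contains the triangle coming from the paw in $G$. Therefore, by Proposition~\ref{lem:paw-free}, $G + E_+$ must be complete multipartite.

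Next I would use the given edge $uv$. Since $u$ and $v$ are adjacent in $G + E_+$, they lie in different parts of this complete multipartite graph. Consequently, for every other vertex $w \in V(G)$, either $w$ lies in the part of $u$ (and is therefore adjacent to $v$ in $G + E_+$), or $w$ lies in the part of $v$ (and is adjacent to $u$), or $w$ lies in yet another part (and is adjacent to both). In all cases, $w \in N_{G+E_+}[\{u,v\}]$, so $V(G) = N_{G+E_+}[\{u,v\}]$.

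Finally I would compare this with the original neighborhood in $G$. For every vertex $w \in V(G) \setminus N_G[\{u,v\}]$, we needed to acquire in $G + E_+$ an edge from $w$ to $u$ or to $v$; this edge is necessarily in $E_+$, and it is incident to $u$ or $v$. Summing over all such $w$ yields at least $|V(G) \setminus N_G[\{u,v\}]|$ distinct edges of $E_+$ incident to $\{u,v\}$, which is exactly what we wanted.

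There is essentially no obstacle here: the whole argument rests on Proposition~\ref{lem:paw-free} together with the trivial observation that in a complete multipartite graph each edge dominates every vertex. The only subtlety to mention explicitly is why $G + E_+$ is connected and still contains a triangle, which is immediate because $E_+$ only adds edges.
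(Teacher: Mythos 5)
Your proof is correct and follows essentially the same approach as the paper: invoke Proposition~\ref{lem:paw-free} to conclude $G+E_+$ is complete multipartite, note that $u$ and $v$ lie in different parts, and observe that every vertex of $V(G)\setminus N_G[\{u,v\}]$ must acquire an edge to $u$ or $v$. The only cosmetic difference is that the paper partitions $U = V(G)\setminus N[\{u,v\}]$ according to whether a vertex lands in $u$'s part or not and counts the two pieces separately, whereas you simply observe that each vertex of $U$ contributes at least one distinct new edge incident to $\{u,v\}$; your version is marginally more streamlined but is the same argument.
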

\begin{proof}
  Let $U = V(G) \setminus N[\{u, v\}]$, and let $E_+$ be a set of edges such that $G + E_+$ is paw-free.  By Proposition~\ref{lem:paw-free}, $G+E_+$ is a complete multipartite graph. Let $U'$ be the set of vertices in $U$ that are in the same part with $u$ in $G + E_+$.  Then $E_+$ must contain all the edges between $u$ and $U\setminus U'$.  On the other hand, since $u v$ is an edge, $v$ is not in the same part with $u$.  Thus, $E_+$ contains all the edges between $v$ and $U'$, and $|E'_+| \ge |U'| + |U\setminus U'| = |U|$.
 \end{proof}

For the paw-free completion problem, we build the modulator using the procedure in Figure~\ref{alg:modulator}, whose correctness is proved in Lemma~\ref{lem:completion-modulator}.  Starting from an empty set of paws, we greedily add paws: If a paw does not intersect any previously chosen paw with two or more vertices, then add it.  All the vertices of the selected paws already satisfy the definition of the modulator.
After that, we have two more steps, taking all the degree-one vertices of all paws in $G$, and deleting a vertex from $M\cap G'$ for certain component $G'$ of $G$.  Their purposes are to simplify the disposal of triangle-free components of $G - M$: In particular, (iii) and (iv) of Lemma~\ref{lem:completion-modulator} are instrumental for dealing with, respectively, \tone{} and \ttwo{} triangle-free components of $G - M$.

\begin{figure}[h]
  \tikzset{
  algorithm node/.style={draw=gray!50, text width=.8\textwidth, rectangle, rounded corners, inner xsep=20pt, inner ysep=10pt}
}

\centering
\tikz\path
  (0,0) node[algorithm node] {
    \begin{minipage}[t!]{\textwidth}
      \small
      \begin{tabbing}
        Aaa\=AA\=AA\=Aa\=Aa\=MMMMAAAAAAAAAAAAa\=A \kill
        0. \> $\mathcal{F} \leftarrow \emptyset$;  $M\leftarrow \emptyset$;
        \\
        1. \> {\bf for each} paw $F$ of $G$ {\bf do}
        \\
        1.1. \>\> {\bf if} $|F\cap F'|\le 1$ for each paw $F'$ in $\mathcal{F}$ {\bf then}
        \\
        \>\>\> $\mathcal{F} \leftarrow \mathcal{F}\cup \{F\}$;
        \\
        \>\>\> add the vertices of $F$ to $M$;
        \\
        1.2. \>\> {\bf else} add the degree-one vertex of $F$ to $M$;
        \\
        2.\> {\bf for each} component $G'$ of $G$ {\bf do}
        \\
        2.1. \>\> {\bf if} an isolated vertex $v$ of $G' - M$ dominates all the edges in $G'$ {\bf then}
        \\
        \>\>\> find an edge $u w$ in $G[N(v)]$;
        \\
        \>\>\> remove $u$ from $M$;
        \\
        3.\> {\bf return} $M$.
      \end{tabbing}
    \end{minipage}
  };
  \caption{The construction of the modulator for $G$.}
  \label{alg:modulator}
\end{figure}

\begin{lemma}\label{lem:completion-modulator}
  Let $(G,k)$ be an instance of the paw-free completion problem.  The vertex set $M$ constructed in Figure~\ref{alg:modulator} has the following properties.
  \begin{enumerate}[(i)]
  \item The construction is correct and its result is a modulator of $G$.
  \item For each component $G'$ of $G$,  we need to add at least $|M \cap G'| /4$ edges to $G'$ to make it paw-free.
  \item Let  $C$ be a triangle-free component of $G- M$.  If $C$ is nontrivial and any vertex in $C$ is contained in a triangle, then $C$ is of type \textsc{i}.
  \item For each isolated vertex $v$ in $G - M$, there is an edge in $G_v-N[v]$, where $G_v$ is the component of $G$ containing $v$.
  \end{enumerate}
\end{lemma}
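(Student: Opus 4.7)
The proof splits along the four claims, and I would address them in turn.

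For (i), step~1.2 is trivially well-defined, and for step~2.1 I would invoke Rule~\ref{rule:paw-free-components} so that $G'$ contains a triangle $pqr$; since $v$ dominates every edge of $G'$, at most one of $p,q,r$ lies outside $N(v)$, so two of them are in $N(v)$ and the edge between them supplies the required $uw$. For the modulator property, after step~1 every paw $F$ already has $|F\cap M|\ge 2$ (either via step~1.1 contributing all four vertices, or via the at-least-two-vertex overlap with a member of $\mathcal{F}$). The substantive concern is that step~2 strips a vertex $u$; for any paw $F\ni u$, I would combine two observations to show $|F\cap M|\ge 3$ before the removal: first, the degree-one vertex of $F$ is in $M$ in all cases (both 1.1 and 1.2 add it); second, since $v$ dominates every edge of $F$, the set $F\setminus N(v)$ is independent in the paw $F$ and hence has size at most~$2$, forcing $|F\cap N(v)|\ge 2$. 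Together these give $|F\cap M|\ge 3$ before removal, so $\ge 2$ after.

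For (ii), I would argue by charging. Any destroyer of a paw $F'\in\mathcal{F}\cap G'$ is an edge inside $V(F')$ (only the two missing edges of the paw can make it no longer induced), and two members of $\mathcal{F}$ share at most one vertex, so these destroyers are pairwise disjoint and $|E_+\cap G'|\ge|\mathcal{F}\cap G'|$. Coupling this with $|M\cap G'|\le 4|\mathcal{F}\cap G'|$ plus the degree-one vertices attached by step~1.2, and charging each such extra vertex to its own distinct destroyer edge of the corresponding overlapping paw, yields $|E_+\cap G'|\ge|M\cap G'|/4$.

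Part (iii) is the main structural obstacle. Given $w\in C$ in a triangle $wxy$, the only nontrivial case is $x,y\in M$: picking a neighbour $w'\in C$ of $w$ (exists because $C$ is nontrivial), either $w'$ is adjacent to $x$ or $y$, in which case the edge $ww'$ in $C$ together with that $M$-vertex shows $C$ is of type~\textsc{i}, or $\{w,w',x,y\}$ induces a paw $F_0$. Applying step~1.1 to $F_0$ would put both $w$ and $w'$ into $M$, requiring step~2 to remove two vertices in one component, which is impossible. Applying step~1.2 adds the degree-one vertex $w'$ to $M$; then $w'\in C$ forces step~2 to have removed $w'$, and step~2.1 produces an isolated $\hat v$ with $\hat v w',\hat v w'',w'w''\in E(G)$ and $w''\in M_{\text{final}}$. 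Since all of $\hat v$'s $G$-neighbours lie in $M_{\text{step 1}}$ and only $w'$ is removed from this component, $\hat v$'s sole neighbour outside $M_{\text{final}}$ is $w'$, placing $\hat v$ inside $C$; then $\hat v w'\subseteq C$ together with $w''\in M$ shows $C$ is of type~\textsc{i}, a contradiction.

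For (iv), assume $v$ is isolated in $G-M$ yet $G_v-N[v]$ is edgeless, so $v$ dominates every edge of $G_v$. I would first verify $v\notin M_{\text{step 1}}$: otherwise step~2 removed $v$, which makes $v$ adjacent to the corresponding isolated vertex (which lies outside $M_{\text{final}}$), contradicting $v$'s final isolation. Thus $v$ is an isolated vertex of $G_v-M_{\text{step 1}}$ meeting the premise of step~2.1, so step~2 removes some $u\in N(\hat v)$. The difficulty is that the algorithm may choose $\hat v\ne v$; but then $v$ and $\hat v$ are both isolated in $G_v-M_{\text{step 1}}$, forcing $v\hat v\notin E$, and $v$'s domination of the edges $\hat v u$ and $\hat v w$ yields $u,w\in N(v)$. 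Either way the removed $u$ lies in $N(v)$, contradicting $v$'s isolation in $G-M_{\text{final}}$.
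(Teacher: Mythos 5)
Your proof is correct and follows the same overall strategy as the paper's: analyse the pre-step-2 set $M'$, reduce to the component where step~2 fires, and show the single removal is harmless. The small differences are worth noting. For the modulator property in~(i), the paper observes that after step~1 every component of $G-M'$ is trivial (since the witnessing vertex $v$ is isolated in $G-M'$ and dominates every edge), so the two $F$-vertices outside $M'$ are nonadjacent and hence the degree-one vertex of $F$ is among them, a contradiction. Your alternative counting argument (``$|F\cap M'|\ge 3$'') is also valid, but the step ``together these give $|F\cap M|\ge 3$'' silently uses that either the degree-one vertex $d$ of $F$ is outside $N(v)$ or $|F\cap N(v)|\ge 3$; this holds because every size-$2$ independent set in a paw contains $d$, so if $|F\setminus N(v)|=2$ then $d\notin N(v)$. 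That observation should be made explicit, otherwise the two sets you are unioning could overlap. For (iii), the paper splits into whether $C$ contains the removed vertex (handled via $X$ being a false twin class) or is a component of $G-M'$ (handled via a neighbour of $w$); you instead argue directly about the paw $F_0$ and deduce that its degree-one vertex $w'$ is the removed vertex, which is a clean unification. One presentational slip: you label the outcome ``$C$ is of type~\textsc{i}'' as ``a contradiction,'' but it is in fact exactly the conclusion to be proved; you should frame that branch as reaching the goal, not contradicting anything.

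In (ii), the phrase ``charging each such extra vertex to its own distinct destroyer edge'' overstates what is available: if $v_1,v_2\in U_2$ are the degree-one vertices of two overlapping paws and $v_1v_2$ is a missing edge of both, a single added edge destroys both paws, so the destroyers incident to $U_2$ need not be distinct per vertex. The correct count is that at least $|U_2|/2$ such edges are needed, because every such edge is incident to at most two vertices of $U_2$, and these edges are disjoint from the $|\mathcal{F}\cap G'|$ destroyers inside $G[U_1]$ (as $U_1\cap U_2=\emptyset$). Your final inequality still follows from $|\mathcal{F}\cap G'|+|U_2|/2\ge(4|\mathcal{F}\cap G'|+|U_2|)/4\ge|M\cap G'|/4$, so the conclusion stands, but the intermediate claim should be weakened to the $|U_2|/2$ bound (which is exactly what the paper does). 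Parts (iv) and the justification of step~2.1's well-definedness are fine and match the paper's reasoning.
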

\begin{proof}
  We may assume without loss of generality that $G$ is connected and contains a paw; otherwise it suffices to work on its components that contain paws one by one, because both the construction and all the statements are component-wise.

  We denote by $M'$ the set of vertices added to $M$ in step~1.  Note that it is a modulator of $G$ because vertices added in step~1.1 already satisfy the definition.
  Let $X$ be the set of isolated vertices in $G - M'$ each of which dominates all the edges in $G$.  If $X$ is empty, then step~2 is not run, $M = M'$ and we are done.  In the rest, $X\ne \emptyset$.
  We argue first that $X$ is a false twin class.  Vertices in $X$ are pairwise nonadjacent by definition.  Suppose for contradiction that $N(x_1)\ne N(x_2)$ for $x_1, x_2\in X$, then there is a vertex $v$ in $N(x_1)\setminus N(x_2)$ or in $N(x_2)\setminus N(x_1)$.  But then $x_2$ does not dominate edge $v x_1$, or $x_1$ does not dominate edge $v x_2$, contradicting the definition of $X$.  We then argue that any vertex $x\in X$ is in a triangle.   By assumption, $G$ contains a triangle $u v w$.  If $x\in \{u,v,w\}$, then we are done.  Otherwise, $x$ must be adjacent to at least two of $\{u,v,w\}$ to dominate all the three edges in this triangle.  Note that $N(x) \in M'$ because $x$ is isolated in $G - M'$.  This justifies step~2.1 of the construction of $M$.  Note that it removes only one vertex from $M'$.

 Now we prove by contradiction that $M$ is a modulator of $G$.  Suppose that there is a paw $F$ with $|F \cap M| \le 1$.  By construction, $|F \cap M'| \ge 2$, which means $|F \cap M| = 1$ and the only vertex in $M' \setminus M$ is in $F$.   Let $\{v\} = M' \setminus M$ and $\{u\} = F \cap M$; note that the other two vertices of $F$ are in $V(G)\setminus M'$.  Since any vertex in $X$ is isolated in $G - M'$ and dominates all the edges of $G$, every component of $G - M'$ is trivial, which means that the two vertices in $F\setminus\{u, v\}$ are not adjacent.  Therefore, one of $u$ and $v$ must be the degree-three vertex of $F$, and the other is a degree-two vertex of $F$.  But the degree-one vertex of $F$ has been added to $M'$ in step 1.1 or 1.2, a contradiction.  This justifies (i).

 Let $U_1$ and $U_2$ be the sets of vertices added to $M'$ in steps~1.1 and 1.2 respectively; $U_1\cup U_2 = M'$.
 For each paw $F$ added in step~1.1, at least one of its missing edges needs to be added to $G$ to make it paw-free.  This edge is not in any previous selected paw $F'$, because we add $F$ only when $|F\cap F'| \le 1$.  Therefore, we need to add at least $|U_1|/4$ edges to $G[U_1]$ to make it paw-free.  On the other hand, each vertex $v$ in $U_2$ is the degree-one vertex of some paw $F$, (it is possible that all other three vertices of $F$ are in $U_1$,) we need to add at least one edge incident to $v$.  Therefore, we need to add at least $|U_2|/2$ edges incident to vertices in $U_2$ to $G$ to make it paw-free.  Note that these two sets of edges we need to add are disjoint.  The total number of edges we need to add to $G$ to make it paw-free is at least
 \[
   |U_1| /4 + |U_2|/2\ge |U_1\cup U_2| /4= |M'|/4 \ge |M|/4.
 \]
 This concludes assertion (ii).

 Assertion (iii) follows from Proposition~\ref{lem:complete-to-triangle-free} if the triangle has two vertices from $C$: Note that the other vertex must be from $M$ because $C$ itself is triangle-free.  Let the vertices in the triangle be $u, v\in M$ and $w\in C$.
 If $C$ contains the vertex in $M'\setminus M$, then $X\subseteq C$ because it is a false twin class, and there is a vertex in $M$ making a triangle with $C$, and it follows from Proposition~\ref{lem:complete-to-triangle-free}.
 Now that $C$ is a nontrivial component of $G - M'$, we can find a neighbor $x$ of $w$ in $C$.  Note that it is adjacent to at least one of $u$ and $v$; otherwise, $x$ is the degree-one vertex of the paw induced by $\{x, u, v, w\}$ and should be in $M'$.  As a result, $x$ is adjacent to at least one of $u$ and $v$, and then we can use Proposition~\ref{lem:complete-to-triangle-free}.

 Assertion (iv)  follows from the construction of $M$ and the fact that $X$ is a false twin class we proved above.
\end{proof}

The following fact is worth mentioning even though it is not explicitly used in our proofs: If a vertex was removed from $M$ in step 2.1 of Figure~\ref{alg:modulator}, then it forms, with its false twin class and possibly other vertices, a \tone{} triangle-free component of $G - M$.

\begin{corollary}\label{cor:modulator}
  If $(G,k)$ is a yes-instance, then $M$ contains at most $4k$ vertices.
\end{corollary}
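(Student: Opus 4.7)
The plan is to obtain this as an immediate consequence of part~(ii) of Lemma~\ref{lem:completion-modulator}, summed over the components of $G$. First, I would fix any witness $E_{+}$ of the yes-instance, that is, a set of at most $k$ edges such that $G + E_{+}$ is paw-free, and then break $E_{+}$ according to where its endpoints lie.

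For each component $G'$ of $G$, let $E_{+}^{G'} := E_{+} \cap \binom{V(G')}{2}$ be the edges of $E_{+}$ with both endpoints in $V(G')$. The key observation is that $G' + E_{+}^{G'}$ is precisely the subgraph of $G + E_{+}$ induced by $V(G')$: adding edges between distinct components of $G$ does not change which pairs of vertices in $V(G')$ are adjacent. Since the paw-free property is closed under taking induced subgraphs, $G' + E_{+}^{G'}$ is paw-free, so $E_{+}^{G'}$ is a valid paw-free completion of $G'$. Lemma~\ref{lem:completion-modulator}(ii) therefore yields $|E_{+}^{G'}| \geq |M \cap V(G')|/4$.

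Now I would sum these inequalities over all components of $G$. Since the sets $E_{+}^{G'}$ are pairwise disjoint subsets of $E_{+}$, and the sets $M \cap V(G')$ partition $M$, we get
\[
k \;\ge\; |E_{+}| \;\ge\; \sum_{G'} |E_{+}^{G'}| \;\ge\; \sum_{G'} \frac{|M \cap V(G')|}{4} \;=\; \frac{|M|}{4},
\]
and therefore $|M| \le 4k$.

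There is no real obstacle: the only point that merits any care at all is the componentwise restriction of $E_{+}$, and that is handled once one notes that induced subgraphs of paw-free graphs are paw-free. All the substantive work has already been done in establishing clause~(ii) of Lemma~\ref{lem:completion-modulator}.
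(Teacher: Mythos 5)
Your proposal is correct and is exactly the argument the paper intends: the corollary is stated without a written proof precisely because it follows from Lemma~\ref{lem:completion-modulator}(ii) by restricting a hypothetical solution $E_+$ to each component (paw-freeness being hereditary) and summing the componentwise lower bounds over the partition of $M$ by components. You have simply spelled out the routine details the paper leaves implicit.
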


We proceed only when $|M| \le 4 k$.
A consequence of this modulator is a simple upper bound on the number of vertices in all the \ttwo{} triangle-free components of $G - M$.  Note that all trivial components of $G - M$ are considered here.

\begin{lemma}\label{lem:completion-type-ii}
Let $(G,k)$ be a yes-instance to the paw-free completion problem on which Rule~\ref{rule:paw-free-components} is not applicable, and let $M$ be the modulator of $G$.
The total number of vertices in all the \ttwo{} triangle-free components of $G - M$ is at most $2k$.
\end{lemma}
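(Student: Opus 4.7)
The plan is a charging argument. Fix a minimum paw-free completion set $E_+$ of $G$; by assumption $|E_+|\le k$. For each \ttwo{} triangle-free component $C$ of $G-M$, let $E_+(C)$ denote the set of edges of $E_+$ having at least one endpoint in $C$. I will show $|E_+(C)|\ge |C|$ for every such $C$; since these components are pairwise disjoint, each edge of $E_+$ lies in $E_+(C)$ for at most two of them, and so
\[
  N \;=\; \sum_C |C| \;\le\; \sum_C |E_+(C)| \;\le\; 2|E_+| \;\le\; 2k,
\]
which is the desired bound (with $N$ the total number of vertices in \ttwo{} triangle-free components of $G-M$).

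The trivial case $C=\{v\}$ is immediate: Lemma~\ref{lem:completion-modulator}(iv) provides an edge $xy$ in $G_v$ with $x,y\notin N[v]$, and the proof of Lemma~\ref{lem:edge-almost-dominating} applied to $xy$ shows that $E_+$ must contain either $xv$ or $yv$, since $v\in V(G_v)\setminus N[\{x,y\}]$. Hence $|E_+(C)|\ge 1=|C|$.

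For a non-trivial such $C$, the crux is the structural claim that no vertex $v\in C$ appears in any paw of $G$. By Lemma~\ref{lem:completion-modulator}(iii), $v$ lies in no triangle of $G$, so in any paw containing $v$ the vertex $v$ can be neither the degree-three vertex nor a degree-two vertex (both sit in the triangle of the paw); hence $v$ would have to be the pendant. But step~1 of Figure~\ref{alg:modulator} places the pendant of every paw into $M'$, and by the remark following Lemma~\ref{lem:completion-modulator}, the single vertex that step~2.1 may remove from $M'$ lies in a \tone{} triangle-free component of $G-M$. Since $C$ is \ttwo{}, we conclude $v\notin M'$, contradicting that $v$ is a pendant.

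With $v$ in no triangle and no paw of $G$, $v$ must be non-adjacent in $G$ to every vertex of every triangle of $G$: adjacency to exactly one vertex of a triangle would induce a paw with $v$ as pendant, while adjacency to two or three would place $v$ in a triangle. Because Rule~\ref{rule:paw-free-components} is not applicable, the component $G_C$ of $G$ containing $C$ has a paw, hence a triangle $wxy$. In the paw-free graph $G+E_+$, Proposition~\ref{lem:paw-free} forces the component containing $G_C$ to be complete multipartite with at least three parts, so $w,x,y$ occupy three distinct parts and $v$ is therefore adjacent in $G+E_+$ to at least two of them; these are two edges of $E_+$ incident to $v$. Summing this bound over $v\in C$ (with $E_+$-edges having both endpoints in $C$ counted twice) gives $2|C|\le 2|E_+(C)|$, which closes the argument. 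The main obstacle I anticipate is precisely the structural claim that vertices of a non-trivial \ttwo{} triangle-free component avoid every paw of $G$, which requires a careful walk through the construction of $M$; the remainder is a clean double-counting.
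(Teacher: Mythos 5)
Your proof is correct, but it is more elaborate than the paper's and leans on the unproven remark after Lemma~\ref{lem:completion-modulator}, which the paper explicitly flags as ``not explicitly used in our proofs.''

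The paper's argument is leaner: it shows that every vertex $x$ in a \ttwo{} triangle-free component of $G-M$ is incident to at least one edge of $E_+$, and is then done by summing. For a nontrivial such component, Lemma~\ref{lem:completion-modulator}(iii) already gives that $x$ lies in no triangle of $G$, hence is adjacent to at most one vertex of any triangle $uvw$ in its component of $G$; since that component must become complete multipartite, $x$ must end up adjacent to at least two of $u,v,w$, so at least one added edge is incident to $x$. (The trivial case is handled exactly as you did.) Summing ``$\ge 1$ incident added edge per vertex'' over all \ttwo{} vertices immediately yields at most $2|E_+|\le 2k$.

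You instead establish the stronger claim that a vertex of a nontrivial \ttwo{} component lies in no paw of $G$, hence is nonadjacent to all three triangle vertices, hence has $\ge 2$ incident added edges. This is true, but it buys you nothing: your per-component inequality $|E_+(C)|\ge|C|$ followed by the double count $\sum_C|C|\le\sum_C|E_+(C)|\le 2|E_+|$ lands on the same $2k$. Moreover, proving ``in no paw'' forces you through the modulator construction and the unproven remark. If you want to keep this route, you can sidestep the remark as follows: the one vertex $u$ that step~2.1 may remove from $M'$ lies, by construction, in a triangle (with the isolated dominating vertex and another member of $M'$), whereas your vertex $v$ lies in no triangle by Lemma~\ref{lem:completion-modulator}(iii); so $v\ne u$. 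Since $v$ is the pendant of some paw, $v\in M'=M\cup\{u\}$, hence $v\in M$, contradicting $v\in C\subseteq V(G)\setminus M$.
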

\begin{proof}
Let $E_+$ be a solution to  $(G, k)$. We show that each vertex in the \ttwo{} triangle-free components of $G-M$ is incident to at least one edge in $E_+$, from which the lemma follows.
If $C$ is a trivial component, then by Lemma~\ref{lem:completion-modulator}(iv), at least one edge in $E_+$ is incident to $v$.
In the rest $C$ is nontrivial.  Let $x$ be any vertex in $C$, and $G'$ the component of $G$ that contains $x$.  By assumption, $G'$ contains a paw, hence some triangle, say, $uvw$.  By Proposition~\ref{lem:paw-free}, we need to make $G'$ into a complete multipartite graph.
  By Lemma~\ref{lem:completion-modulator}(iii), $x$ cannot be involved in any triangle, hence adjacent to at most one vertex in $\{u, v, w\}$.  Therefore, at least one edge between $x$ and this triangle is in $E_+$, and the proof is now complete.
\end{proof}

Hereafter we consider the components $G'$ of $G$ one by one; let $M' = M\cap V(G')$.  If all components of $G'-M'$ are \ttwo{} triangle-free components, then a bound of the size of $V(G')\setminus M'$ is given in Lemma~\ref{lem:completion-type-ii}.  In the rest, at least one component of $G' - M'$ is a \tone{} triangle-free component or a complete multipartite component.
The way we bound $|V(G')\setminus M'|$ for such a component is to show, after applying some reductions, the minimum number of edges we need to add to $G'$ to make it paw-free is linear on $|V(G')\setminus M'|$.
The first one is very straightforward.

\begin{lemma}\label{lem:double-components}
  If two components in $G' - M'$ are not \ttwo{} triangle-free components, then we need to add at least $|V(G')\setminus M'| / 2$ edges to $G'$ to make it paw-free.
\end{lemma}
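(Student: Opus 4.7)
The plan is to show that every vertex $u \in V(G') \setminus M'$ is incident to at least one edge of any completion $E_+$; since each edge has two endpoints, this yields $2|E_+| \geq |V(G') \setminus M'|$ and hence the claimed bound.

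First I would observe that $G'$ contains a paw (else Rule~\ref{rule:paw-free-components} would apply), hence a triangle, so by Proposition~\ref{lem:paw-free} the connected paw-free graph $G' + E_+$ must be complete multipartite. Let $C_1$ and $C_2$ denote the two non-\ttwo{} triangle-free components of $G' - M'$ promised by the hypothesis. For each $i \in \{1,2\}$ I would extract a triangle $T_i$ of $G'$ with at least two vertices in $C_i$: if $C_i$ is a complete multipartite component (which by definition has at least three parts), take any triangle inside $C_i$; if $C_i$ is a \tone{} triangle-free component, take $T_i = \{v_i, x_i, y_i\}$, where $v_i \in M'$ witnesses the \tone{} status and $x_i y_i$ is the corresponding edge of $C_i$.

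For any $u \in V(G') \setminus M'$, let $C$ be the component of $G' - M'$ containing $u$ and pick $T = T_1$ when $C \neq C_1$, otherwise $T = T_2$; this is well-defined since $C_1 \neq C_2$. By construction, at least two vertices of $T$ lie in a component of $G' - M'$ distinct from $C$, and because different components of $G' - M'$ are pairwise non-adjacent in $G'$, those two vertices are non-adjacent to $u$ in $G'$. Hence $u$ is adjacent in $G'$ to at most one vertex of $T$ (the possible $M'$-vertex of $T$). In the complete multipartite graph $G' + E_+$, however, the three vertices of the triangle $T$ occupy three distinct parts, so $u$ shares a part with at most one of them and is adjacent to at least two. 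Consequently $E_+$ must contain an edge from $u$ to $T$, and summing over $u \in V(G') \setminus M'$ gives the required inequality.

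The main obstacle, though slight, is extracting the triangle $T_i$ in the \tone{} case: one must verify that $\{v_i, x_i, y_i\}$ actually lies inside $G'$ (so $v_i \in M'$ rather than merely $M$) and that $x_i, y_i \in C_i$, both of which follow from the connectivity of $G'$ together with the definition of \tone{}. Once these are in place the rest of the argument is uniform over all choices of $u$, and no case analysis on whether $C$ is \ttwo{}, trivial, or equal to one of $C_1, C_2$ is needed.
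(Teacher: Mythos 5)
Your proof is correct, and it takes a genuinely different route from the paper's. The paper picks an edge $x_1x_2$ from one non-\ttwo{} component and $y_1y_2$ from the other, applies Lemma~\ref{lem:edge-almost-dominating} to each, adds the two resulting lower bounds, subtracts $2$ for the possible double-counting of edges between $\{x_1,x_2\}$ and $\{y_1,y_2\}$, and finally invokes $|V(G')\setminus M'|\geq 4$ to convert the additive bound $|V(G')\setminus M'|-2$ into the multiplicative bound $|V(G')\setminus M'|/2$. You instead argue per vertex: you extract from each of $C_1,C_2$ a \emph{triangle} with two of its vertices inside the component (using the \tone{} witness together with Proposition~\ref{lem:complete-to-triangle-free} when the component is triangle-free), and then observe that in the complete multipartite graph $G'+E_+$ the three vertices of a triangle land in three distinct parts, so any $u$ outside $M'$ that is non-adjacent in $G'$ to two of them must pick up at least one edge of $E_+$; halving then gives $|E_+|\geq |V(G')\setminus M'|/2$ directly. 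What the paper's route buys is reuse of an already-proved lemma and, incidentally, the stronger bound $|V(G')\setminus M'|-2$ when $|V(G')\setminus M'|>4$; what your route buys is a cleaner, self-contained, uniform argument with no double-counting correction and no appeal to the side condition $|V(G')\setminus M'|\geq 4$. Both correctly establish the stated bound.
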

\begin{proof}
  Let $X$ and $Y$ be the two components of $G' - M'$ that are not \ttwo{} triangle-free components.
  By definition, they are not trivial, and hence from each of them, we can find an edge, say $x_1 x_2$ in $X$ and $y_1 y_2$ in $Y$.
Since vertices in different components of $G' - M'$ are not adjacent, no vertex in $V(G')\setminus M'$ can dominate both $x_1 x_2$ and $y_1 y_2$.
By Lemma~\ref{lem:edge-almost-dominating}, the number of edges we need to add is at least
\[
  |V(G')\setminus(M'\cup X)| + |V(G')\setminus(M'\cup Y)| - 2\ge |V(G')\setminus M'| - 2.
  \]
Since neither of $X$ and $Y$ is trivial, $|V(G')\setminus M'| \ge 4$, and the lemma follows.
\end{proof}

Henceforth, $G' - M'$ has precisely one \tone{} triangle-free component or one complete multipartite component, but not both.  Each part of such a component is an independent set (recall that a \tone{} triangle-free component is complete bipartite by Proposition~\ref{lem:complete-to-triangle-free}).
The next two propositions are on independent sets $I$ of $G$.
The first is about the cost of separating vertices in $I$ into more than one part; it also means that a sufficiently large independent set cannot be separated.
The second states that if each of the vertices in $I$ is adjacent to all the other vertices, then we can remove all but one vertex in $I$ from the graph.

\begin{proposition}\label{lem:completion-big-independent-set}
  Let $G'$ be a connected graph containing a paw, and let $I$ be an independent set of $G'$.  If we do not add all the missing edges between $I$ and $N(I)$, then we need to add at least $|I| - 1$ edges among $I$ to $G'$ to make it paw-free.
\end{proposition}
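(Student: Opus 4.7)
The plan is to exploit Proposition~\ref{lem:paw-free}: since $G'$ is connected and contains a paw (hence a triangle), any paw-free supergraph $G' + E_+$ must be a complete multipartite graph. So I would fix an arbitrary solution $E_+$ for $G'$ and consider how the vertices of $I$ are distributed among the parts $P_1, \ldots, P_t$ of $G' + E_+$.

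First I would handle the degenerate case: if every vertex of $I$ lies in a single part $P_j$, then every neighbor of $I$ in $G'$ lies in some other part, so in $G' + E_+$ every vertex of $N(I)$ is adjacent to every vertex of $I$. Hence all missing $I$--$N(I)$ edges are present in $E_+$, contradicting the hypothesis. Therefore the set $I$ must be split across at least two parts of $G' + E_+$.

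Next I would count edges inside $I$. Let the nonempty intersections be $I \cap P_{i_1}, \ldots, I \cap P_{i_s}$ with $s \ge 2$. Because $I$ is independent in $G'$, every edge between two vertices of $I$ lying in distinct parts $P_{i_j}, P_{i_l}$ must belong to $E_+$. The number of such forced edges is the number of edges in the complete multipartite graph on $|I|$ vertices with part sizes $|I \cap P_{i_1}|, \ldots, |I \cap P_{i_s}|$. With $s \ge 2$, this quantity is minimized (for fixed $|I|$) by making one part a singleton and the other of size $|I|-1$, yielding exactly $|I|-1$ edges; any other split gives strictly more. This gives the claimed bound.

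The argument is short and the only step requiring a moment of care is the minimization: once we know $I$ must span at least two parts, the worst case is a singleton versus the rest, and that is precisely where the lower bound $|I|-1$ is tight. No other obstacle is expected, since Proposition~\ref{lem:paw-free} does all the structural work upfront.
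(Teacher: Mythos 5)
Your proof is correct and follows essentially the same route as the paper's: invoke Proposition~\ref{lem:paw-free} to get a complete multipartite structure, observe that if $I$ sits inside a single part then all missing $I$--$N(I)$ edges must have been added, and otherwise bound the forced edges among $I$ from below by $|I|-1$, achieved when $I$ splits into a singleton and a part of size $|I|-1$. You spell out the minimization step slightly more explicitly than the paper's parenthetical remark, but the argument is the same.
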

\begin{proof}
  Let $E_+$ be a set of edges such that $G' + E_+$ is paw-free.  Since $G'$ contains a paw, $G' + E_+$ is a complete multipartite graph by Proposition~\ref{lem:paw-free}.  If all vertices in $I$ are in the same part of $G' + E_+$, then all the missing edges between $I$ and $N(I)$ are in $E_+$ and we are done.  Otherwise, vertices in $I$ are in at least two parts of $G' + E_+$, and the number of edges among them in $E_+$ is at least $|I| - 1$ (when one part has one vertex and the rest are in another part).
 \end{proof}

\begin{proposition}\label{lem:universal-independent-set}
  Let $I$ be an independent set in a component $G'$ of a graph $G$.  If every vertex in $I$ is adjacent to every vertex in $V(G')\setminus I$, then $(G, k)$ is a yes-instance if and only if $(G - (I\setminus\{v\}), k)$ is a yes-instance for any $v\in I$.
  Moreover, if $G - I$ is connected, then $(G, k)$ is a yes-instance if and only if $(G - I, k)$ is a yes-instance.
\end{proposition}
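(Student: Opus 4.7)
The plan is to exploit the hypothesis to recognize $I$ as a false twin class of $G$: by independence of $I$ and the universality assumption, each $u\in I$ has open neighborhood exactly $V(G')\setminus I$, which is common across $I$. Together with the elementary observation that no two vertices of a paw share the same open neighborhood within the paw (the paw's vertices have pairwise distinct in-paw neighborhoods $\{x\},\{w,y,z\},\{x,z\},\{x,y\}$, where $w$ is the pendant, $x$ the degree-$3$ vertex, and $y,z$ the remaining triangle vertices), this will drive a ``swap'' argument.

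The forward direction of each equivalence is immediate by heredity of paw-freeness: restricting any completion $E_+$ of $G$ to pairs within $V(G)\setminus(I\setminus\{v\})$ or within $V(G)\setminus I$ yields a valid completion of the smaller graph of no larger size.

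For the backward direction of the first equivalence, take a minimum completion $E_+'$ of $G-(I\setminus\{v\})$ and reuse it as $E_+:=E_+'$ for $G$. The key observation is that $v$ is universal inside its component of $G-(I\setminus\{v\})$---that component equals $(V(G')\setminus I)\cup\{v\}$, and $v$ is adjacent to all of $V(G')\setminus I$ by hypothesis---so the minimality of $E_+'$ forbids any edge incident to $v$. Consequently $E_+'$ touches no vertex of $I$ at all; every $u\in I$ keeps its neighborhood $V(G')\setminus I$ in $G+E_+$, and $I$ remains a false twin class of $G+E_+$. Any induced paw $F$ of $G+E_+$ must therefore satisfy $|F\cap I|\le 1$. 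If $F\cap I=\emptyset$, or $F\cap I=\{v\}$, then $F\subseteq V(G)\setminus(I\setminus\{v\})$ and $F$ is already an induced paw of $(G-(I\setminus\{v\}))+E_+'$, a contradiction. If $F\cap I=\{u\}$ for some $u\ne v$, swap $u$ with $v$: since they are false twins in $G+E_+$, the set $F':=(F\setminus\{u\})\cup\{v\}$ induces an isomorphic paw, and $F'\subseteq V(G)\setminus(I\setminus\{v\})$, again a contradiction.

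For the second equivalence I would repeat this template with a minimum completion $E_+''$ of $G-I$ reused as $E_+$ for $G$. The same false-twin analysis traps any putative paw $F$ of $G+E_+''$ in the case $F\cap I=\{u\}$, with $u$ forced to be the degree-$3$ vertex of $F$ (because $u$ is adjacent to all of $V(G')\setminus I\supseteq F\setminus\{u\}$); hence the remaining three vertices of $F$ induce an edge plus an isolated vertex inside $(G-I)+E_+''$. The connectedness of $G-I$ is then used, together with the structural dichotomy of Proposition~\ref{lem:paw-free} applied to the component of $(G-I)+E_+''$ containing this configuration, to amplify it into a genuine induced paw of $(G-I)+E_+''$, contradicting its paw-freeness. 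The main obstacle lies precisely here: without a false-twin partner of $u$ in the reduced graph, the amplification step must lean on the paw-free structure (triangle-free or complete multipartite) of $(G-I)+E_+''$ and on the connectedness hypothesis to locate the missing fourth vertex of the paw, making the argument noticeably more delicate than the clean swap that suffices for the first equivalence.
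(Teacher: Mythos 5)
For the first assertion you take a genuinely different route from the paper. The paper invokes Proposition~\ref{lem:paw-free} to treat $(G'-I')+E_+$ as complete multipartite, exhibits $v$ as a singleton part, and directly builds the completed graph by appending $I$ as a part. You instead argue by contradiction, observe that $I$ remains a false twin class of $G+E_+$, note that a paw never contains two false twins (the four in-paw neighbourhoods are pairwise distinct), and swap the offending vertex of $I$ for $v$. Your route is actually a bit more robust: it does not rely on $(G'-I')+E_+$ being complete multipartite, so it silently covers the corner case where that component is triangle-free, a case the paper's presentation glosses over. The one loose end is the step ``$v$ is universal in its component, hence minimality forbids any edge of $E_+'$ incident to $v$.'' That inference additionally needs the (easy, but unstated) fact that one may assume $E_+'$ adds no edges between distinct components of $G-I'$: since $\bigl((G-I')+E_+'\bigr)[V(C)]$ is paw-free for every component $C$, discarding all cross-component edges of $E_+'$ still gives a valid, no larger, completion. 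With that spelled out, your argument for the first part is complete and correct.

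Your hesitation about the second assertion is well placed, and you should not expect a ``delicate'' patch to rescue the same template. After deleting all of $I$ there is no false-twin partner of $u$ in $G-I$ to swap to, and the three vertices $F\setminus\{u\}$ only form a $K_2+K_1$ inside $(G-I)+E_+''$. If the component of $(G-I)+E_+''$ hosting them is complete multipartite your amplification works (take any vertex of a third part), but if that component is merely triangle-free the pattern $K_2+K_1$ is perfectly admissible and cannot be amplified into a paw. In fact the clause as literally stated fails: take $G$ to be the path $a\,b\,c\,d$ together with a vertex $u$ adjacent to $a,b,c,d$, and $I=\{u\}$. Then $I$ is independent, every vertex of $I$ is adjacent to all of $V(G')\setminus I$, and $G-I$ is connected and paw-free, so $(G-I,0)$ is a yes-instance; yet $\{u,a,b,d\}$ induces a paw in $G$, so $(G,0)$ is not. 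The moreover-clause therefore needs an extra hypothesis (for instance that $G'-I$ contains a triangle, which forces $(G'-I)+E_+''$ to be complete multipartite and lets the paper's ``append $I$ as a new part'' construction go through). That hypothesis does hold at the two places the clause is invoked in the paper, but it must be made explicit; the paper's remark that the second proof is ``very similar and omitted'' is not enough, and neither is the sketch you give.
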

\begin{proof}
  It suffices to show the if direction.  It is vacuously true when $G$ is paw-free, and henceforth we assume otherwise.
  Suppose that $E_+$ is a solution to $(G - I', k)$, where $I' = I\setminus\{v\}$.  Note that $G' - I'$ is connected because every other vertex is adjacent to $v$.  By Proposition~\ref{lem:paw-free},  $(G' - I') + E_+$ is a complete multipartite graph; let $U_1$, $\ldots$, $U_\ell$ be its parts.  Since $v$ is adjacent to every other vertex, it is in a size-one part; without loss of generality, let it be $U_\ell$.  On the other hand, $I$ remains an independent set and adjacent to every vertex in $V(G')\setminus I$ in  $G + E_+$.  Therefore, $G' + E_+$ is a complete multipartite graph with parts $U_1$, $\ldots$, $U_{\ell-1}$, $I$, hence paw-free.  The proof of the second assertion is very similar and omitted.
 \end{proof}

We are now ready to consider \tone{} triangle-free components.

\begin{lemma}\label{lem:reduced-tone}
  Let $C$ be a \tone{} triangle-free component of $G' - M'$ and let $L\uplus R$ be the bipartition of $C$ with $|L|\geq |R|$.  If any  of the following conditions is satisfied, then we need to add at least $|C|/32$ edges to $G'$ to make it paw-free.
\begin{enumerate}[(i)]
  \item $|L|\leq 4|M'|$;
  \item there is an edge in $G'-N[L]$;
  \item $V(G')\neq N[C]$ and $|L|\leq 2|R|$;
  \item there are $|L|/2$ or more missing edges between $L$ and $N(L)$;
  \item $|L| \le |R|+|M'|$ and $G-N[R]$ has an edge; or
  \item $|L| \le |R|+|M'|$ and there are $|R|/2$ or more missing edges between $R$ and $N(R)$.
\end{enumerate}
\end{lemma}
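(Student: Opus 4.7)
The plan is to handle the six conditions separately, in each case exhibiting a ``charge'' of at least $|C|/32$ edges that any completion must pay.  Throughout, I would let $E_+$ be an arbitrary solution, and rely on Proposition~\ref{lem:paw-free} which forces $G' + E_+$ to be complete multipartite, together with the three workhorses: Lemma~\ref{lem:edge-almost-dominating}, Lemma~\ref{lem:completion-modulator}(ii), and Proposition~\ref{lem:completion-big-independent-set}.  A key preliminary observation I would record is that, since $L$ and $R$ are joined by a complete bipartite subgraph, no vertex of $L$ can share a part with any vertex of $R$ in $G' + E_+$.

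For (i), the bound is immediate: by Lemma~\ref{lem:completion-modulator}(ii) we must pay at least $|M'|/4$ edges, and $|C|\le 2|L|\le 8|M'|$ yields $|M'|/4\ge |C|/32$.  For (ii), an edge $xy$ in $G'-N[L]$ satisfies $L\subseteq V(G')\setminus N[\{x,y\}]$, so Lemma~\ref{lem:edge-almost-dominating} already demands $|L|\ge |C|/2$ edges.  For (iii), I would fix any $v\in V(G')\setminus N[C]$ and look at $v$'s part $P_v$ in $G'+E_+$.  Since $L$ and $R$ cannot share a part, at most one of $L\cap P_v$ and $R\cap P_v$ is nonempty; in either case $v$ must be joined by new edges to the entire other side, forcing at least $\min(|L|,|R|)=|R|$ additions.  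The hypothesis $|L|\le 2|R|$ then gives $|R|\ge |C|/3$.

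For (iv), I would apply Proposition~\ref{lem:completion-big-independent-set} to the independent set $L$: either $E_+$ contains all the $\ge |L|/2$ missing edges between $L$ and $N(L)$, or it contains at least $|L|-1\ge |L|/2$ edges inside $L$; in either branch the charge is at least $|L|/2\ge |C|/4$.  For (v) and (vi) I would first note that we may assume (i) fails, i.e.\ $|L|>4|M'|$; combined with $|L|\le |R|+|M'|$ this yields $|R|>3|M'|$ and $|C|=|L|+|R|\le 2|R|+|M'|<7|R|/3$, hence $|R|\ge 3|C|/7$.  For (v), Lemma~\ref{lem:edge-almost-dominating} applied to an edge in $G'-N[R]$ forces at least $|R|$ additions; for (vi), Proposition~\ref{lem:completion-big-independent-set} applied to $R$ forces at least $|R|/2$ additions; both exceed $|C|/32$.

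I expect the main obstacle to be case (iii), where the reasoning is combinatorial rather than a direct counting bound: I have to rule out configurations in which $v$ is ``absorbed'' into a part together with many vertices of $L\cup R$ without paying, which is where the complete bipartite structure of $C$ (and thus the impossibility of $L$ and $R$ coexisting in one part) becomes essential.  The rest of the work is bookkeeping, with the constant $32$ being dictated entirely by case (i) via the $|M'|/4$ bound from the modulator construction.
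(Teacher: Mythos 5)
Your proposal is correct and follows essentially the same route as the paper: case (i) via Lemma~\ref{lem:completion-modulator}(ii), cases (ii) and (v) via Lemma~\ref{lem:edge-almost-dominating}, cases (iv) and (vi) via Proposition~\ref{lem:completion-big-independent-set}, and for (v)--(vi) the same observation that failing (i) forces $|R|$ to be a constant fraction of $|C|$. The only cosmetic deviation is in case (iii), where the paper charges $v$ against a matching of size $|R|$ between $L$ and $R$ while you reason directly about the part $P_v$ containing $v$ in $G'+E_+$; these are interchangeable phrasings of the same idea, since both hinge on the observation you state up front that $L$ and $R$ cannot coexist in a part.
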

\begin{proof}
  (i) If $|L|\leq 4 |M'|$, then $|C| = |L| + |R| \le 2 |L| \le 8 |M'|$, and it follows from Lemma~\ref{lem:completion-modulator}(ii).
  (ii)  By Lemma~\ref{lem:edge-almost-dominating}, we need to add at least $|L| \ge |C|/2$ edges.
  (iii) Since $C$ is complete bipartite and $|L| \ge |R|$, we can find a matching of size $|R|$ between $L$ and $R$.  By Lemma~\ref{lem:edge-almost-dominating}, for each vertex $v\in V(G')\setminus N[C]$, the number of edges between $v$ and $C$ we need to add is at least $|R| = (2 |R| + |R|)/3 \ge (|L| + |R|)/3 =|C|/3$.
  (iv) By Proposition~\ref{lem:completion-big-independent-set}, we need to add at least $|L|/2 \ge |C|/4$ edges.

  In the rest, (v) and (vi), $|L|\leq |R|+|M'|$.  We may assume none of the previous conditions is satisfied.  Therefore, $|L| >4|M'|$, which means $|L| \le 2|R|$.  Also note that the proofs for these two conditions are almost the same as conditions (ii) and (iv) respectively.
  (v) By Lemma~\ref{lem:edge-almost-dominating}, we need to add at least $|R| \ge |C|/3$ edges.  (vi) By Proposition~\ref{lem:completion-big-independent-set}, we need to add at least $|R|/2 \ge |C|/6$ edges.
\end{proof}

We say that a \tone{} triangle-free component $C$ of $G' - M'$ is \emph{reducible} if none of the conditions in Lemma~\ref{lem:reduced-tone} holds true. 

\begin{reduction}\label{rule:tone}
  Let $C$ be a \tone{} triangle-free component of $G' - M'$ and let $L\uplus R$ be the bipartition of $C$ with $|L|\geq |R|$.  If $C$ is reducible, then add all the missing edges between $L$ and $N(L)$ and all the missing edges between $V(G')\setminus N[L]$ and $N(L)$; decrease $k$ accordingly; and remove all but one vertex from $(V(G')\setminus N[L]) \cup L$.
\end{reduction}
\begin{lemma}\label{lem:rule:tone}
Rule~\ref{rule:tone} is safe.
\end{lemma}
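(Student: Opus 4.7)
The plan is to split Rule~\ref{rule:tone} into two steps: (a) augment $G$ to $G^+ := G + (A\cup B)$, where $A$ is the set of missing edges between $L$ and $N(L)$ and $B$ the set of missing edges between $V(G')\setminus N[L]$ and $N(L)$, reducing the budget to $k^+ := k - |A\cup B|$; and (b) delete all but one vertex of $J := L \cup (V(G')\setminus N[L])$. I will prove the two equivalences $(G,k) \equiv (G^+,k^+) \equiv (G^+ - (J\setminus\{v\}),k^+)$, of which the composition is precisely the instance produced by Rule~\ref{rule:tone}.

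The second equivalence follows from Proposition~\ref{lem:universal-independent-set} applied to the component of $G^+$ containing $J$. The set $J$ is independent in $G^+$: $L$ is independent as a part of the bipartite $C$; $V(G')\setminus N[L]$ is independent by the failure of condition~(ii) of Lemma~\ref{lem:reduced-tone}; no edge crosses between the two by the definition of $N[L]$; and every edge of $A\cup B$ has exactly one endpoint in $J$. Moreover, by the construction of $A\cup B$, every vertex of $J$ is adjacent in $G^+$ to every vertex of $V(G')\setminus J = N(L)$, so the proposition applies.

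For the first equivalence, the backward direction is immediate since $F\cup A\cup B$ is a solution for $(G,k)$ whenever $F$ is a solution for $(G^+,k^+)$. For the forward direction, given a minimum completion $E_+$ of $G$, I propose to construct an alternative completion $E_+^{\ast}$ with $A\cup B \subseteq E_+^{\ast}$ and $|E_+^{\ast}| \le |E_+|$, so that $E_+^{\ast}\setminus (A\cup B)$ witnesses $(G^+,k^+)$. Since $G+E_+$ is complete multipartite by Proposition~\ref{lem:paw-free}, with parts $P_1,\ldots,P_t$, I coarsen to the partition $\{J\} \cup \{P_i\setminus J : 1 \le i \le t\}$ (dropping empties): this remains a partition into $G$-independent sets, because $J$ is $G$-independent and each $P_i\setminus J \subseteq P_i$ is $G$-independent, so the associated completion is $E_+^{\ast} = (A\cup B) \cup \bigl(E_+\cap \binom{N(L)}{2}\bigr)$.

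The main obstacle is the size comparison $|E_+^{\ast}| \le |E_+|$, which reduces to the inequality $|A|+|B| \le |E_+\cap\{\text{edges incident to }J\}|$. I would verify it by a case analysis on the two regimes permitted by reducibility, $|L|>|R|+|M'|$ versus $|L|\le|R|+|M'|$, and on the sub-dichotomy $V(G')=N[C]$ versus $|L|>2|R|$ granted by the failure of (iii). The failure of (iv) gives $|A|<|L|/2$; in the $|L|\le|R|+|M'|$ regime the failures of (v) and (vi) play the symmetric role for $R$ to bound $|B|$; and the lower bound on the $J$-incident edges of $E_+$ is obtained by applying Lemma~\ref{lem:edge-almost-dominating} to edges of $C$ (forcing each vertex of $V(G')\setminus N[L]$ either to sit in $L$'s part of $G+E_+$ or to pay many edges toward $L$), together with Proposition~\ref{lem:completion-big-independent-set} whenever $L$ is split across multiple parts of $G+E_+$. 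Matching these estimates across all sub-cases is the delicate step.
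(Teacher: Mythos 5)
Your framework is the same as the paper's: the paper sets $L' := L\cup(V(G')\setminus N[L])$ (your $J$), shows $L'$ is independent using the failure of condition~(ii) of Lemma~\ref{lem:reduced-tone}, defines $E'_+$ from the missing edges $A\cup B$ together with $E_+\cap\binom{N(L)}{2}$ (your $E_+^{\ast}$), reduces the size comparison to showing that $E_+$ has at least $|A|+|B|$ edges incident to $L'$, splits on $|L|>|R|+|M'|$ versus $|L|\le|R|+|M'|$, and handles the vertex removal via Proposition~\ref{lem:universal-independent-set} exactly as you describe. So the two-step decomposition, the coarsened partition $\{J\}\cup\{P_i\setminus J\}$, and the target inequality are all correct.

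What is missing is the counting inequality itself, which is essentially the entire content of the lemma, and your sketch of how to get it is not quite on target. In the regime $|L|\le|R|+|M'|$, the failure of condition~(iii) forces $V(G')=N[C]$, which in turn puts $X := V(G')\setminus N[L]$ entirely inside $M'$; so conditions~(v) and (vi) are \emph{not} a symmetric bound on $|B|$ as you suggest. The paper instead runs a further sub-case split on whether some part $P$ of $G'+E_+$ satisfies $|P\cap L|\ge|L|/2$, and compares the $E_+$-edges forced among and out of $L'\setminus P$ (at least $|L'\setminus P|\cdot|L|/2$) against the $E'_+$-edges on $L'\setminus P$ (at most $|L'\setminus P|\cdot|M'|+|R|/2$), closing the gap with conditions~(i), (iv), (vi). In the regime $|L|>|R|+|M'|$, the estimate needs an ingredient absent from your sketch: a matching of size $|N(L)|$ from $N(L)$ into $L$, whose existence is itself extracted from the failure of condition~(iv), and which charges each $x\in X$ with at least $|N(L)\setminus N(x)|$ edges of $E_+$ incident to $x$. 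Without these two ingredients the estimate does not close, so as it stands the proposal is a correct outline but not a proof.
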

\begin{proof}
  Let $X = V(G')\setminus N[L]$ and $L' = L\cup X$; note that $L'$ is an independent set because condition (ii) of Lemma~\ref{lem:reduced-tone} is not true.  We show the existence of a solution that contains all the missing edges between $L'$ and $N(L)$.  Let $E_+$ be any solution to $G$, we may assume that it does not have edges between $G'$ and other components of $G$. Let $E'=E_+\cap {V(G')\setminus L' \choose~2}$.
  We show that the set $E'_+$, consisting of  $E'$ together with all the missing edges between $L'$ and $N(L)$, is also a solution of $(G, k)$.  By Proposition~\ref{lem:paw-free}, $G'+E_+$ is a complete multipartite graph, then its subgraph $(G'-L')+E'$ is complete bipartite or complete multipartite.  Therefore, $G'+E_+'$ is a complete multipartite graph, where $L'$ forms a single part.
  Since $E_+$ and $E'_+$ have the same set of edges incident to $V(G')\setminus L'$, to show $|E'_+| \le |E_+|$ we may focus on their edges incident to $L'$.

  Case 1, $|L|>|R|+|M'|$.
  Note that $N(L)\subseteq M'\cup R$, and hence $|N(L)| \le |R| + |M'| < |L|$.
We claim that there is a matching from $N(L)$ to $L$ of size $|N(L)|$; otherwise there exists a vertex set $Y\subseteq N(L)$ with $|N(Y)\cap L| < |Y|$, but then the missing edges between $L$ and $N(L)$ is at least $|Y| \cdot |L\setminus N(Y)|> |Y| \cdot (|L| - |Y|) \ge |L| - 1$,  and condition (iv) of Lemma~\ref{lem:reduced-tone} would be true.
As a result, for each vertex $x\in X$, by Lemma~\ref{lem:edge-almost-dominating}, the number of edges in $E_+$ incident to $x$ is at least $|N(L)\setminus N(x)|$.
In summary, the number of edges in $E_+$ incident to $L'$ is at least $\delta + \sum_{x\in X} |N(L)\setminus N(x)|$, while the number of edges in $E'_+$ incident to $L'$ is $\delta + \sum_{x\in X} |N(L)\setminus N(x)|$, where $\delta$ denotes the missing edges between $L$ and $N(L)$.

Case 2, $|L|\leq |R|+|M'|$.  Since $C$ is reducible, none of the conditions
 in Lemma~\ref{lem:reduced-tone} is satisfied.  In particular, $|L| > 4 |M'|$ (condition i), which means $|R|>3|M'|$ and $|L| \le 2|R|$.  We must have $V(G')=N[C]$ (condition iii).  Moreover, by conditions (ii) and (v), both $V(G') \setminus N[L]$ and $V(G') \setminus N[R]$ are independent sets, and by conditions (iv) and (vi), for $Z\in \{L, R\}$, the number of missing edges between $Z$ and $N(Z)$ is smaller than $|Z|/2$.

 If $|P\cap L|<|L|/2$ for every part $P$ in $G'+E_+$, 
 then every vertex $a$ in $L'$ is incident to more than $|L|/2$ edges in $E_+$.  The total number of edges in $E_+$ that are incident to $L'$ is more than
$
   \frac{1}{2}(|L|\cdot |L|/2) +|X|\cdot |L|/2,
$
 while the total number of edges in $E_+'$ that are incident to $L'$ is at most $|L|/2 + |R|/2 + |X| \cdot |M'\setminus X|$.  Thus, $|E_+'|\leq |E_+|$.

Now suppose that there is a part $P$ in $G'+E_+$  with $|P\cap L|\ge |L|/2$.  If $L'=P$, then $|E_+'|=|E_+|$ and we are done.  In the rest $L'\neq P$. 
Since $|P\cap L|\geq |L|/2$ and there are less than $|L|/2$ missing edges between $N(L)$ and $L$, no vertex in $N(L)$ can be in $P$.  Therefore $P\subseteq L'$, and $L'\not \subseteq P$.  If there is precisely one vertex $v$ in $L'\setminus P$, then all the $|L'| - 1$ edges between $v$ and $P$ are in $E_+$.  But the number of missing edges between $v$ and $M'\cup R$ (in $E_+'$) is less than
\[
  |M'| + |R|/2 \le |L|/4 + |L|/2 < |L| - 1\le |L'| - 1.
\]
  Otherwise, suppose that $|L'\setminus P| = t > 1$, then at least $t(|L'| - t) \ge t |L| / 2$ edges among $L'$ are in $E_+$.  But the number of missing edges between $L'\setminus P$ and $M'\cup R$ (in $E_+'$) is less than $t |M'| + |R|/2 \le t |L|/4 + |L|/2 \le t |L| / 2$.

 We have thus proved that it is safe to add all the missing edges between $L'$ and $N(L)$.  After that, every vertex in $L'$ is adjacent to every vertex in $G' - L'$, and the correctness of removing all but one vertex in $L'$ from $G$ follows from Proposition~\ref{lem:universal-independent-set}.
\end{proof}

In the last we consider the complete multipartite components of $G' - M'$.

\begin{lemma}\label{lem:reduced-cm}
    Let $C$ be a complete multipartite component of $G' - M'$, and let $P^*$ be a largest part of $C$.  If any of the following conditions is satisfied, then we need to add at least $|C|/12$ edges to $G'$ to make it paw-free.
\begin{enumerate}[(i)]
\item $|C| \le 3 |M'|$;
\item there is an edge in $G'- N[C]$;
\item $|P^*| >  2|C|/3$ and $G'- N[P^*]$ has an edge;
\item $|P^*| \le 2|C|/3$ and $V(G')\ne N[C]$; or
\item $|P^*| \le 2|C|/3$ and $V(G')= N[C]$, and for every part $P$ of $C$,
  \begin{itemize}
  \item $G' - N[P]$ contains an edge, or
  \item  there are at least $|P|$ missing edges between $V(G')\setminus N[P]$ and $N(P)$.
  \end{itemize}
\end{enumerate}
\end{lemma}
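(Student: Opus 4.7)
The plan is to dispense with~(i)--(iv) by quick appeals to the lemmas already at hand, and to reserve the bulk of the work for~(v), which needs a careful charging argument against a minimum solution.

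For~(i), Lemma~\ref{lem:completion-modulator}(ii) supplies $\ge|M'|/4\ge|C|/12$ edges. For~(ii), the promised edge $xy$ in $G'-N[C]$ has every vertex of $C$ in $V(G')\setminus N[\{x,y\}]$, so Lemma~\ref{lem:edge-almost-dominating} forces $\ge|C|$ additions; the same lemma applied to an edge in $G'-N[P^*]$ handles~(iii) and yields $\ge|P^*|>2|C|/3$. For~(iv), pick $v\in V(G')\setminus N[C]$ and let $Q$ be its part in $G'+E_+$; since distinct parts of $C$ are completely adjacent in $G'$, $Q\cap C$ lies inside a single part of $C$, so $|Q\cap C|\le|P^*|\le 2|C|/3$, and each of the $\ge|C|/3$ remaining vertices of $C$ requires a fresh edge to $v$.

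For~(v) the first step is to translate hypotheses~(a) and~(b) through the modulator. Since $V(G')=N[C]$, Proposition~\ref{lem_v-part} ensures that every $m\in M'$ is adjacent to $C$ with its non-neighbors in $C$ forming exactly one part (or none). Setting $M_P:=\{m\in M':m\text{ has no neighbor in }P\}$, we get $V(G')\setminus N[P]=M_P$, and a direct check shows that the missing edges between $V(G')\setminus N[P]$ and $N(P)$ are precisely the missing $G'[M']$-edges between $M_P$ and $M'\setminus M_P$. Hence~(a) becomes ``$G'[M_P]$ contains an edge'' and~(b) becomes ``at least $|P|$ such missing $M'$-edges between $M_P$ and $M'\setminus M_P$.''

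Now fix a minimum $E_+$ and write $G'+E_+$ as a complete multipartite graph with parts $Q_1,\ldots,Q_s$. Call a part $P$ of $C$ \emph{split} if it meets at least two of the $Q_j$'s and \emph{flat} otherwise; in the flat case let $Q_P$ be its containing part, and set $S_P:=Q_P\cap M_P$ and $\bar S_P:=M_P\setminus S_P$ (noting $Q_P\setminus P\subseteq M_P$). A split $P$ yields $\ge|P|-1$ edges of $E_+$ inside $P$ by Proposition~\ref{lem:completion-big-independent-set}. A flat $P$ under~(a) has some endpoint $x\in\bar S_P$ of the $G'[M_P]$-edge (since $S_P$ is independent in $G'$); then $x$ sits in a part of $G'+E_+$ other than $Q_P$ and must be adjacent there to all of $P$, contributing $|P|$ edges of $E_+$ incident to $P$. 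A flat $P$ under~(b) with a missing edge $(u,v)\in M_P\times(M'\setminus M_P)$ and $u\in\bar S_P$ again yields $|P|$ edges from $u$ to $P$; otherwise every such missing edge has $u\in S_P$ and, since $v\notin Q_P$ (because $v$ is adjacent to $P$ in $G'$), the edge $uv$ itself lies in $E_+$, so $\ge|P|$ edges of $E_+$ sit inside $M'$. Bucketing by location, the edges inside split parts, the edges from $\bar S_P$ to $P$ for flat parts, and the $M'$-internal edges for flat parts occupy three disjoint types; within each of the first two types distinct parts contribute disjoint edges, while an $M'$-internal edge is charged to at most two parts through its endpoints' $M_P$-memberships, costing a factor $\tfrac12$. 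Summing gives $|E_+|\ge\tfrac12\sum_P|P|=|C|/2$, well above $|C|/12$. The main obstacle is exactly this last case~(v): distilling~(a) and~(b) into the $M_P$-picture and then running the charging carefully enough to absorb the double-counting of $M'$-internal edges.
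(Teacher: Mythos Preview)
Your proof is correct and reaches the same bounds as the paper ($|C|/3$ in case~(iv) and $|C|/2$ in case~(v)), but it takes a genuinely different route for those two cases. The paper invokes Corollary~\ref{cor:false-twins} together with Lemma~\ref{lem:false-twins} to conclude that in any \emph{minimum} solution each part of $C$ remains a false-twin class and hence lies inside a single part of $G'+E_+$. This collapses your split/flat dichotomy entirely (every part is automatically flat) and reduces~(v) to a two-line argument: for each $P$, either all of $X=V(G')\setminus N[P]$ lands in $P$'s part of $G'+E_+$ (forcing the $\ge|P|$ missing $X$--$N(P)$ edges into $E_+$) or some $x\in X$ does not (forcing the $|P|$ edges from $x$ to $P$ into $E_+$); since the sets $P\cup X$ over distinct parts meet each edge at most twice, the total is $\ge|C|/2$. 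Your argument is more elementary in that it never appeals to Lemma~\ref{lem:false-twins}, at the price of the split-case analysis and the three-bucket charging; this makes your proof more self-contained but noticeably longer. One small quibble: your citation of Proposition~\ref{lem:completion-big-independent-set} for split parts does not quite match its hypothesis (which concerns missing edges to $N(I)$, not $I$ being split); the $\ge|P|-1$ bound you want follows directly by counting edges across a nontrivial partition of $P$, which is exactly the computation inside that proposition's proof.
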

\begin{proof}
  (i) If $|C|\leq 3 |M'|$, then it follows from Lemma~\ref{lem:completion-modulator}(ii).
  (ii) By Lemma~\ref{lem:edge-almost-dominating}, we need to add at least $|C|$ edges.
   (iii) By Lemma~\ref{lem:edge-almost-dominating}, we need to add at least $|P^*| > 2|C|/3$ edges.
  (iv) Let $E_+$ be a minimum set of edges such that $G + E_+$ is paw-free.
  By Corollary~\ref{cor:false-twins} and Lemma~\ref{lem:false-twins}, the vertices in any part of $C$ remain in the same part in $G'+E_+$.
  Since $V(G') \ne N[C]$, there is a vertex $v\in V(G')\setminus N[C]$, and the missing edges between $v$ and all but one part of $C$ are in $E_+$.  Since
$P^*$ is a largest part and $|P^*| \le 2|C|/3$, we need to add at least $|C|/3$ edges.

For (v), we show that we need to add at least $|P|$ edges incident to $P\cup (V(G')\setminus N[P])$.
By Proposition~\ref{lem_v-part} and the fact $V(G')=N[C]$, the sets $V(G')\setminus N[P]$ for different parts are disjoint.  As a result, each edge is counted at most twice, and the total number of edges we need to add is $|C|/2$.
If there is an edge in $G' - N[P]$, then it follows from Lemma~\ref{lem:edge-almost-dominating}.  Now there is no edge in $G' - N[P]$ and there are more than $|P|$ missing edges between $V(G')\setminus N[P]$ and $N(P)$.  Let $X=V(G')\setminus N[P]$, and let $E_+$ be a minimum set of edges such that $G + E_+$ is paw-free.  By Corollary~\ref{cor:false-twins} and Lemma~\ref{lem:false-twins}, the vertices in any part of $C$ remain in the same part in $G'+E_+$.  If all vertices in $X$ are in the same part as $P$ in $G'+E_+$, then all missing edges between $X$ and $N(P)$ are in $E_+$.  Otherwise, there is a vertex $x$ in $X$ that is not in the same part as $P$ in $G'+E_+$, and all edges between $x$ and $P$ are in $E_+$.  In either case, we need to add at least $|P|$ edges incident to $P\cup X$.  This concludes the proof.
\end{proof}

We say that a complete multipartite component $C$ of $G' - M'$ is \emph{reducible} if none of the conditions in Lemma~\ref{lem:reduced-cm} holds true.

\begin{reduction}\label{rule:ttwo}
  Let $C$ be a reducible complete multipartite component of $G' - M'$ and $P^*$ a largest part of $C$.
  \begin{enumerate}[(1)]
  \item If  $|P^*| > 2|C|/3$, then add all the missing edges between $V(G')\setminus N[P^*]$ and $N(P^*)$; decrease $k$ accordingly;
    and remove $(V(G')\setminus N[P^*])\cup P^*$ from $G$.
  \item Otherwise, find
    a part $P$ such that $V(G')\setminus N[P]$ is an independent set and there are less than $|P|$ missing edges between $V(G')\setminus N[P]$ and $N(P)$.  Add all the missing edges between $V(G')\setminus N[P]$ and $N(P)$; decrease $k$ accordingly;
   and remove $P\cup (V(G')\setminus N[P])$ from $G$.
  \end{enumerate}
\end{reduction}

\begin{lemma}\label{lem: rule: onlyCM}
Rule~\ref{rule:ttwo} is safe.
\end{lemma}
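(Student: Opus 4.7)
\smallskip\noindent\textbf{Proof plan.}
The proof treats the two cases of the rule in parallel. In Case~1 set $X=V(G')\setminus N[P^*]$; in Case~2 set $X=V(G')\setminus N[P]$. In both, the rule adds all missing edges between $X$ and $N(P^*)$ (resp.\ $N(P)$) and then deletes $P^*\cup X$ (resp.\ $P\cup X$). Because condition~(iii) of Lemma~\ref{lem:reduced-cm} fails in Case~1 and by the explicit choice of $P$ in Case~2, the set $X$ is independent; since no vertex of $X$ is adjacent to $P^*$ (resp.\ $P$), the union $P^*\cup X$ (resp.\ $P\cup X$) is independent as well.

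The crux of the proof is the following exchange claim: there exists a minimum paw-free completion $E_+$ of $G$ such that in $G'+E_+$ the part containing $P^*$ (resp.\ $P$) is exactly $P^*\cup X$ (resp.\ $P\cup X$). The inclusion $Q\subseteq P^*\cup X$, where $Q$ denotes the part of $G'+E_+$ containing $P^*$, is immediate: by Corollary~\ref{cor:false-twins} and Lemma~\ref{lem:false-twins} the part $Q$ contains $P^*$, and being independent it must be disjoint from $N(P^*)$. To upgrade this to equality, the plan is to start with an arbitrary minimum completion and show that moving every vertex of $X\setminus Q$ into $Q$ yields a completion of no larger size. In Case~1 the slack for this exchange is supplied by the dominance $|P^*|>2|C|/3$, so $Q$ is already larger than every other part by a comfortable margin; in Case~2 the slack comes from the defining property of $P$, that $X$ and $N(P)$ have fewer than $|P|$ missing edges between them, which caps the cost of the swap per moved vertex strictly below the gain of $|P|$ obtained from merging into $Q$.

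Given the exchange claim, $E_+$ must contain all missing edges between $X$ and $N(P^*)$ (resp.\ $N(P)$). There are no missing edges between $P^*$ itself (resp.\ $P$) and $N(P^*)$ (resp.\ $N(P)$): by Proposition~\ref{lem_v-part}, every vertex of $M'$ adjacent to some vertex of $P^*$ is in fact adjacent to all of $P^*$, and every vertex of $C\setminus P^*$ is adjacent to all of $P^*$ by the complete-multipartite structure of $C$. After adding the identified missing edges to $G$ and decreasing $k$ accordingly, the set $P^*\cup X$ (resp.\ $P\cup X$) becomes an independent set of $G$ that is universal to the remaining vertices of $G'$. The plan then is to invoke the second part of Proposition~\ref{lem:universal-independent-set} to delete this entire independent set, after verifying that $G'-(P^*\cup X)=G'[N(P^*)]$ (resp.\ $G'-(P\cup X)=G'[N(P)]$) is connected. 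This verification is short: $C\setminus P^*$ (resp.\ $C\setminus P$) is a complete multipartite graph on at least two parts and hence connected, and every vertex of $N(P^*)\cap M'$ (resp.\ $N(P)\cap M'$) has a neighbor in $C\setminus P^*$ (resp.\ $C\setminus P$) by Proposition~\ref{lem_v-part}.

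The main obstacle is the exchange claim in Case~2. There the accounting must show that the gain of at least $|P|$ edges per vertex moved into $Q$ (from removing all the $x$-to-$Q$ edges already forced into $E_+$) outweighs the new missing edges introduced between each moved $x$ and its former partmates. The hypothesis that fewer than $|P|$ edges are missing between $X$ and $N(P)$ is precisely what makes this accounting close, but reducing the bookkeeping to this single inequality, and in particular arguing that the new partition remains a valid complete multipartite partition with at least three parts, requires some care.
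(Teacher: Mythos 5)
Your plan reproduces the paper's own proof: the same two-case split on $|P^*|$ vs.\ $2|C|/3$, the same use of Corollary~\ref{cor:false-twins}, Lemma~\ref{lem:false-twins}, Proposition~\ref{lem_v-part}, and Proposition~\ref{lem:universal-independent-set}, and the same exchange argument.  The Case~2 accounting you flag as requiring care closes in one step, exactly as you anticipate: if any $x\in X$ lies outside the part $Q\supseteq P$ of $G'+E_+$, then all $|P|$ edges from $x$ to $P$ are already in $E_+$, which on its own exceeds the total number of missing edges between $X$ and $N(P)$, so moving all of $X$ into $Q$ cannot increase $|E_+|$.  Your concern about the partition remaining valid with at least three parts is moot once you note $X\cap C=\emptyset$ (every vertex of $C\setminus P$ is adjacent to $P$), so the parts of $C$ other than $P$ are untouched by the exchange and at least two of them survive alongside $P\cup X$.
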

\begin{proof}
  Let $E_+$ be a minimum set of edges such that $G + E_+$ is paw-free;
  note that it does not have edges between $G'$ and other components of $G$.  By Proposition~\ref{lem:paw-free}, $G'+E_+$ is a complete multipartite graph.
By Corollary~\ref{cor:false-twins} and Lemma~\ref{lem:false-twins}, vertices in the same part of $C$ remain in the same part in $G'+E_+$.

Case 1: $|P^*| > 2|C|/3$. Let $X= V(G') \setminus N[P^*]$. Since $C$ is reducible, none of conditions in Lemma~\ref{lem:reduced-cm} is satisfied. In particular, $|C|>3|M'|$ (condition i) and $X$ is an independent set (condition iii).  For each vertex $x\in X$, if it is not in the same part of $G'+E_+$ as $P^*$, then all the $|P^*|$ missing edges between $x$ and $P^*$ are in $E_+$.  
   Since
   \[
     |N(P^*)| < |(C\setminus P^*) \cup M'| = |C\setminus P^*|+ |M'|<|C|/3+|C|/3< |P^*|,
   \]
   putting $x$ to the same part as $P^*$ minimizes the number of edges incident to it.   After adding all the missing edges between $X$ and $N(P^*)$, every vertex in the independent set $P^*\cup X$ is adjacent to every vertex in $V(G') \setminus (P^*\cup X) = N(P^*)$.  By Proposition~\ref{lem_v-part}, $G' - P^*\cup X$ is connected (note that $C$ had at least three parts).  It is thus safe to remove $P^*\cup X$ from $G$ by Proposition~\ref{lem:universal-independent-set}.

Case 2: $|P^*|\le 2|C|/3$. Since $C$ is reducible, none of conditions in Lemma~\ref{lem:reduced-cm} is satisfied. In particular, we have $V(G')=N[C]$ (condition iv) and we can find a part $P$ of $C$ such that $V(G')\setminus N[P]$ is an independent set and the missing edges between $V(G')\setminus N[P]$ and $N(P)$ is less than $|P|$ (condition v).
Let $X=V(G')\setminus N[P]$.
For each vertex $x\in X$, if it is not in the same part of $G'+E_+$ as $P$, then all the $|P|$ missing edges between $x$ and $P$ are in $E_+$.  This is more than the total number of missing edges between $X$ and $N(P)$.  Hence, putting $X\cup P$ in the same part minimizes the number of edges incident to it.
After adding all the missing edges between $X$ and $N(P)$, all vertices in the independent set $P\cup X$ are adjacent to $V(G') \setminus (P\cup X) = N(P)$.  By Proposition~\ref{lem_v-part}, $G' - P\cup X$ is connected (note that $C$ had at least three parts).  It is thus safe to remove $P\cup X$ from $G$ by Proposition~\ref{lem:universal-independent-set}.
\end{proof}

We summarize our kernelization algorithm for the paw-free completion problem in Figure~\ref{alg:completion} and use it to prove our main result of this section.

\begin{figure}[h]
  \tikzset{
  algorithm node/.style={draw=gray!50, text width=.8\textwidth, rectangle, rounded corners, inner xsep=20pt, inner ysep=10pt}
}
\centering
  \tikz\path
  (0,0) node[algorithm node] {
    \begin{minipage}[t!]{\textwidth}
      procedure \texttt{reduce$(G, k)$}

      \small
      \begin{tabbing}
        Aaa\=AA\=AA\=Aa\=Aa\=MMMMAAAAAAAAAAAAa\=A \kill
        0. \> {\bf if} $k<0$ {\bf then return} a trivial no-instance;
        \\
        1.\> remove all paw-free components from $G$;
        \\
        2. \> construct modulator $M$;
        \\
        3. \> {\bf if} $|M| > 4k$ {\bf then return} a trivial no-instance;
        \\
        4.\> {\bf if} $> 2k$ vertices in \ttwo{} triangle-free components of $G - M$  {\bf then}
        \\
        4.1.\>\> {\bf return} a trivial no-instance;
        \\
        5.\> {\bf for each} component $G'$ of $G$ {\bf do}
        \\
        5.1.\>\> $M' \leftarrow V(G')\cap M$;
        \\
        5.2. \>\> \textbf{if} 2 components in $G'-M'$ are not \ttwo{} triangle-free components \textbf{then}
        \\
        \>\>\> \textbf{goto} 5;
        \\
        5.3.\>\> {\bf if} $G' - M'$ has a \tone{} triangle-free component $C$ {\bf then}
        \\
        \>\>\> {\bf if} $C$ is reducible {\bf then} apply Rule~\ref{rule:tone} and \textbf{return} \texttt{reduce}$(G, k)$;
        \\
        5.4.\>\> {\bf if} $G' - M'$ has a complete multipartite component $C$ {\bf then}
        \\
        \>\>\> {\bf if} $C$ is reducible {\bf then} apply Rule~\ref{rule:ttwo} and \textbf{return} \texttt{reduce}$(G, k)$;
        \\
        6.\> {\bf if} $|V(G)| \leq 38k$ {\bf then return} $(G, k)$;
        \\
        7.\> {\bf else return} a trivial no-instance.
      \end{tabbing}

    \end{minipage}
  };
  \caption{The kernelization algorithm for the paw-free completion problem.}
  \label{alg:completion}
\end{figure}

\begin{proof}[Proof of Theorem~\ref{thm:completion}]
  We use the algorithm described in Figure~\ref{alg:completion}.
The correctness of steps~0 and 1 follows from the definition of the problem and  Rule~\ref{rule:paw-free-components} respectively.  Steps 2 and 3 are justified by Lemma~\ref{lem:completion-modulator} and Corollary~\ref{cor:modulator}.  Step~4 is correct because of Lemma~\ref{lem:completion-type-ii}, and after that we only need to consider the components of $G-M$ that are not \ttwo{} triangle-free components, which are dealt with in step~5.  The cost of a component of $G$ is the minimum number of edges we need to add to it to make it paw-free.

If two components of $G'-M'$ are not \ttwo{} triangle-free components, then by Lemma~\ref{lem:double-components}, the cost of $G'$ is at least $|V(G')\setminus M| / 2$.  Therefore, there is nothing to do for step 5.2.
Henceforth, $G' - M'$ has precisely one \tone{} triangle-free component or one complete multipartite component, but not both.
The algorithm enters step~5.3 if there is a \tone{} triangle-free component $C$ in $G'-M'$.  If $C$ is reducible, the correctness of Rule~\ref{rule:tone} is given in Lemma~\ref{lem:rule:tone}; otherwise, the cost of $G'$ is at least $|C|/32$ by Lemma~\ref{lem:reduced-tone}.
The algorithm enters step~5.4 if there is a complete multipartite component $C$ in $G'-M'$.  If $C$ is reducible, the correctness of Rule~\ref{rule:ttwo} is given in Lemma~\ref{lem: rule: onlyCM}; otherwise, the cost of $G'$ is at least $|C|/12$ by Lemma~\ref{lem:reduced-cm}.

When the algorithm reaches step~6, neither of Rules~\ref{rule:tone} and~\ref{rule:ttwo} is applicable.  There are at most $4 k$ vertices in $M$, at most $2 k$ vertices in all the \ttwo{} triangle-free components of $G - M$.  On the other hand, for each other vertex, there is an amortized cost of at least $1/32$.  Therefore, if $(G,k)$ is a yes-instance, then the number of vertices is at most $38k$, and this justifies steps 6 and 7.

We now analyze the running time of this algorithm.  When each time the algorithm calls itself in step~5.3 or 5.4, it removes at least one vertex from the graph.  Therefore, the recursive calls can be made at most $n$ times.  On the other hand, each step clearly takes polynomial time.  Therefore, the algorithm returns in polynomial time.
\end{proof}

We spare little effort in getting a better kernel size and more efficient implementation.  For the running time, e.g., we do not need to call the algorithm after each reduction (application of Rules~\ref{rule:tone} and ~\ref{rule:ttwo}).  We believe that, with more careful analysis, one may show that our kernel is between $12 k$ and $16 k$ vertices.  However, to find a kernel of less than $10 k$ vertices for the problem or to implement it in linear time is quit a challenge.

\section{Paw-free edge deletion} \label{sec_2}

For this problem, we construct the modulator in the standard way.  We greedily find a maximal packing of edge-disjoint paws.  
We can terminate by returning ``no-instance'' if there are more than $k$ of them.  Let $M$ denote the set of vertices in all the paws found; we have $|M|\leq 4k$.   It is a modulator because every paw not included shares at least an edge with some chosen one, hence at least two vertices.

The safeness of the following rule is straightforward: If we do not delete this edge, we have to delete a distinct one from each of the paws, hence $k + 1$.
\begin{reduction}\label{rdt_2}
  Let $u v$ be an edge of $G$.  If there exist $k + 1$ paws such that for any pair of them, the only common edge is $u v$, then delete $uv$ from $G$ and decrease $k$ by 1.
\end{reduction}

\subsection{Complete multipartite components}

We first deal with complete multipartite components of $G- M$.
\begin{reduction}\label{rdt_3}
  Let $C$ be a complete multipartite component of $G- M$.  From each part of $C$, delete all but $k+1$ vertices.
\end{reduction}
\begin{lemma}\label{lem_rdt3}
Rule~\ref{rdt_3} is safe.
\end{lemma}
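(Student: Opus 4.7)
The plan is to prove both directions of the equivalence. The \emph{only-if} direction is immediate: any solution $E_-$ for $(G,k)$ restricts to a solution $E_- \cap E(G')$ for the reduced instance $(G',k)$, where $G'=G-D$ is the graph after applying the rule and $D$ is the set of removed vertices, because $G' - (E_- \cap E(G'))$ is an induced subgraph of the paw-free graph $G - E_-$.

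For the \emph{if} direction, let $E_-' \subseteq E(G')$ be a solution for $(G',k)$; I will argue that $G - E_-'$ is already paw-free, which together with $|E_-'| \leq k$ yields a solution for $(G,k)$. Two observations are crucial. First, by Corollary~\ref{cor:false-twins} each part $U_i$ of a complete multipartite component of $G - M$ is a false twin class of $G$; in particular every vertex of $D$ lies in such a $U_i$ and has well-defined false twins in $G$. Second, $E_-' \subseteq E(G')$ implies that no edge of $E_-'$ is incident to any $w \in D$, so $N_{G-E_-'}(w) = N_G(w)$ for every $w \in D$.

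Suppose for contradiction that $G - E_-'$ contains a paw $P$. Since $G' - E_-'$ is paw-free, $P \cap D \neq \emptyset$. For each $w \in P \cap D$, say $w \in U_i$, I will select an \emph{untouched} vertex $u \in U_i \cap V(G')$, meaning one incident to no edge of $E_-'$, with $u \notin P$. An untouched vertex exists because $|U_i \cap V(G')| \geq k+1$ while at most $|E_-'| \leq k$ vertices of $U_i \cap V(G')$ are incident to an edge of $E_-'$. To see that $u$ may be chosen outside $P$, note that any two vertices of $P \cap U_i$ must be non-adjacent and hence form one of the non-adjacent pairs of $P$, which always include the pendant. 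A short false-twin analysis then rules out two such vertices both lying in $D$ (their neighborhoods in $G - E_-'$ would coincide, contradicting that the pendant and a degree-$2$ triangle vertex have distinct $P$-neighborhoods) and forces any second vertex of $P \cap U_i$ to lie in $V(G')$ and be incident to the deleted edge joining it to the triangle vertex it is non-adjacent to in $P$. Consequently every vertex of $P \cap U_i \cap V(G')$ is touched, so an untouched $u$ lies outside $P$.

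Finally, replace each $w \in P \cap D$ by its chosen untouched $u$; distinct $w$'s lie in distinct parts, so the replacements are distinct vertices. Because $u$ is an untouched false twin of $w$ in $G$, we have $N_{G' - E_-'}(u) = N_G(u) = N_G(w) = N_{G-E_-'}(w)$, which guarantees that adjacencies between the new vertices and the remaining vertices of $P$, as well as adjacencies among the new vertices themselves, match those in $P$. The resulting 4-vertex set $P' \subseteq V(G')$ therefore induces a paw in $G' - E_-'$, contradicting the choice of $E_-'$. The main obstacle is the false-twin case analysis guaranteeing an untouched substitute outside $P$; once that is in place, the substitution and the contradiction are purely mechanical.
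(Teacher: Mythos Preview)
Your proof is correct and follows essentially the same approach as the paper: find, for each deleted vertex appearing in a hypothetical paw, a surviving false twin whose adjacencies to the other paw vertices are undisturbed by $E_-'$, and substitute it in to obtain a paw in $G'-E_-'$. The only difference is cosmetic: the paper removes one vertex at a time (so the paw contains exactly one deleted vertex, and its two cases are your $|P\cap U_i|=1$ and $|P\cap U_i|=2$), whereas you remove all of $D$ at once and handle several simultaneous replacements. One small point you leave implicit is that each edge of $E_-'$ has at most one endpoint in the independent set $U_i$, which is why at most $|E_-'|\le k$ vertices of $U_i\cap V(G')$ are touched; this is immediate but worth stating.
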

\begin{proof}
  It suffices to show that for each vertex $v$ deleted from a part $P$ of a complete multipartite component of $G- M$, $(G, k)$ is a yes-instance if and only if $(G - v, k)$ is a  yes-instance.  The only if direction is trivial.  For the other direction, let that $E_-$ be a solution to $(G - v, k)$.  We prove by contradiction that $G - E_-$ is paw-free as well.  A paw of $G - E_-$ must contain $v$, and since vertices in $P$ are nonadjacent (in both $G$ and $G - E_-$), there can be at most two vertices in $P\cap F$.

  Case 1, $v$ is the only vertex in $P\cap F$. By Corollary~\ref{cor:false-twins}, each vertex in $P$ has the same neighbors as $v$ in $G$.  Since $|E_-| \leq k$ and $|P\setminus \{v\}| \geq k+1$, there exists at least one vertex $v' \in P\setminus \{v\}$ that has the same adjacency to $F\setminus\{v\}$ as $v$ in $(G - v) - E_-$, a contradiction.

  Case 2, there exists another vertex $u\in P\cap F$. Since $v$ and $u$ are nonadjacent, they must be the degree-one and a degree-two vertices of $F$.  They are false twins in $G$, and no edge incident to $v$ is deleted.  Thus, $u$ is the degree-one vertex of $F$.  Since $|P\setminus F| \geq k$ and $|E_-| \leq k$, of which one is incident to $u$, there must be at least one vertex $v'\in P\setminus F$ whose neighbors in $F$ are same as $v$ in $(G - v) - E_-$, a contradiction.  This concludes the proof.
\end{proof}

\begin{reduction}\label{rdt_4}
  Let $C$ be a complete multipartite component of $G- M$.  Delete all but $k+4$ parts of $C$ that are adjacent to all vertices in $N(C)$.
\end{reduction}
\begin{lemma}\label{lem_rdt4}
Rule~\ref{rdt_4} is safe.
\end{lemma}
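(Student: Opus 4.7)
I would show Rule~\ref{rdt_4} is safe by proving that deleting a single ``extra'' full part $P_s$ preserves the answer, i.e., $(G,k)$ is a yes-instance if and only if $(G-P_s,k)$ is. The forward direction is trivial: restricting any solution $E_-$ for $G$ to edges of $E(G-P_s)$ yields a set whose removal from $G-P_s$ gives a paw-free induced subgraph of $G-E_-$.

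For the backward direction, let $E_-$ be a solution for $G-P_s$ with $|E_-|\le k$. The critical observation is that $E_-\subseteq E(G-P_s)$, so $E_-$ contains no edge incident to $P_s$; hence every $v\in P_s$ satisfies $N_{G-E_-}(v)=N_G(v)=N(C)\cup(C\setminus P_s)$. Together with Corollary~\ref{cor:false-twins}, this means the vertices of $P_s$ remain a false twin class in $G-E_-$. I would argue by contradiction: suppose $G-E_-$ contains a paw $F$. Then $F\cap P_s\ne\emptyset$ (else $F\subseteq V(G-P_s)$ contradicts paw-freeness of $G-P_s-E_-$), and since $P_s$ is independent while the paw has maximum independent set of size $2$, $|F\cap P_s|\in\{1,2\}$. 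The case $|F\cap P_s|=2$ is ruled out at once: the two vertices of $F\cap P_s$ would form the unique non-adjacent pair of the paw, which consists of the degree-one vertex and one of the degree-two vertices; these have distinct neighborhoods in $F$, contradicting that they are false twins in $G-E_-$.

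Now suppose $|F\cap P_s|=1$, and let $v\in F\cap P_s$. I would find a vertex $v'$ in some kept full part $Q$, with $F\cap Q=\emptyset$, such that $F'=(F\setminus\{v\})\cup\{v'\}$ also induces a paw in $G-E_-$; then $F'\subseteq V(G-P_s)$ gives a paw in $G-P_s-E_-$, the desired contradiction. Because $P_s$ and $Q$ are both full parts, $v$ and every $v'\in Q$ have identical $G$-adjacency to any vertex outside $P_s\cup Q$, and in particular to $F\setminus\{v\}$. Since $v$ is intact in $G-E_-$, matching adjacencies reduces to the requirement that $v'a\notin E_-$ for every neighbor $a$ of $v$ in $F$.

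Counting finishes the argument. Among the $k+4$ kept full parts, at most three intersect $F\setminus\{v\}$, leaving at least $k+1$ candidate parts $Q$ disjoint from $F$. If none of them contains a good $v'$, then each such $Q$ contributes at least $|Q|\ge 1$ distinct $E_-$-edges from $Q$ to $N_F(v)$; summing over the $\ge k+1$ candidates yields at least $k+1$ such edges, contradicting $|E_-|\le k$. The main subtlety is the initial observation that $E_-$ cannot touch $P_s$ at all; this keeps $v$ automatically ``clean'' and collapses what would otherwise be a delicate case analysis about killed edges at $v$ into a single pigeonhole step.
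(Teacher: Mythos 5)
Your proof is correct and follows the same strategy as the paper's: observe that $E_-$ does not touch $P_s$, use Corollary~\ref{cor:false-twins} to rule out $|F\cap P_s|=2$, and then apply a pigeonhole over the $\ge k+1$ kept full parts disjoint from $F$ to find a replacement vertex $v'$. You simply spell out the counting step that the paper leaves terse ("at least one of them have the same adjacency to $F$ as $P$"). One small slip: a paw has two non-adjacent pairs, not a unique one; but since every non-adjacent pair pairs the degree-one vertex with a degree-two vertex, your conclusion that the two cannot be false twins still stands.
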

\begin{proof}
  It suffices to show that for each part $P$ deleted from a complete multipartite component of $G- M$, $(G, k)$ is a yes-instance if and only if $(G - P, k)$ is a  yes-instance.  The only if direction is trivial.  For the other direction, let that $E_-$ be a solution to $(G - P, k)$.  We prove by contradiction that $G - E_-$ is paw-free as well.  A paw $F$ of $G - E_-$ must intersect $P$.  By Corollary~\ref{cor:false-twins}, vertices in $P$ are false twins in $G$, hence also in $G - E_-$, and thus there cannot be two of them in $F$.
  Of the at least $k+4$ parts of $C$ satisfying the condition of Rule~\ref{rdt_4},  at least $k + 1$ are disjoint from $F$.  At least one of them have the same adjacency to $F$ as $P$.   Thus, there is a paw in $(G - P) - E_-$, a contradiction.  This concludes the proof.
\end{proof}

In passing we point out that all the parts satisfying the condition of Rule~\ref{rdt_4} form a module of $G$.  Moreover, it suffices to keep $k + 2$ parts, but then the argument would be more involved.

\begin{lemma}\label{lem_numofc.m.}
  After Rules~\ref{rdt_3} and \ref{rdt_4} are applied, there are at most $O(k^3)$ vertices in the complete multipartite components of $G - M$.
\end{lemma}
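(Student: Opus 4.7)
The plan is to multiply three bounds: the number of complete multipartite components of $G - M$, the number of parts per such component, and the number of vertices per part. The last is $k+1$ by Rule~\ref{rdt_3}. The other two rely on the structural Propositions~\ref{M_partitionbig} and~\ref{lem_v-part}.

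Before starting, I would note the simple safe reduction that discards any paw-free component of $G$ outright (no edges need to be deleted from it). After this preprocessing, every complete multipartite component $C$ of $G-M$ must have at least one neighbor in $M$, since otherwise $C$ is a full, paw-free component of $G$ and would have been removed.

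To bound the number of components, fix such a $C$ and any $v\in M \cap N(C)$. Since $C$ has at least three parts and, by Proposition~\ref{lem_v-part}, $v$ misses at most one part of $C$, the vertex $v$ is adjacent to vertices from at least two parts, forming a triangle with $C$. Proposition~\ref{M_partitionbig} then forces all neighbors of $v$ into $M \cup V(C)$, so $v$ cannot see any other component of $G - M$. Distinct complete multipartite components thus have pairwise disjoint ``private'' sets of neighbors in $M$, giving at most $|M| \le 4k$ such components.

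To bound the number of parts in a fixed $C$, I would split them according to whether they are adjacent to all of $N(C)$. Rule~\ref{rdt_4} caps the former class at $k+4$. For a part $P$ in the latter class, some $v \in N(C)$ misses a vertex of $P$, and Proposition~\ref{lem_v-part} forces $P = V(C) \setminus N(v)$; since a part is determined by its complement in $V(C)$, the assignment $P \mapsto v$ is injective and the latter class has at most $|N(C)| \le |M| \le 4k$ parts. Hence $C$ has at most $5k + 4 = O(k)$ parts, and multiplying gives $O(k) \cdot O(k) \cdot O(k) = O(k^3)$ vertices in total. The only mildly delicate step is the injection from components to $M$, which is where the ``private neighbor'' consequence of Proposition~\ref{M_partitionbig} is essential; the rest is a direct application of the two reduction rules to the structural propositions.
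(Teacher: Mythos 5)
Your proof is correct and follows essentially the same three-factor decomposition as the paper: at most $4k$ complete multipartite components (via the private-neighbor argument from Propositions~\ref{lem_v-part} and~\ref{M_partitionbig}), at most $|M| + (k+4) \le 5k+4$ parts per component (at most $|M|$ parts can be missed by some vertex of $N(C)$, the rest capped by Rule~\ref{rdt_4}), and at most $k+1$ vertices per part by Rule~\ref{rdt_3}. Your explicit observation that one must first discard paw-free components of $G$ (so that every complete multipartite component of $G-M$ actually has a neighbor in $M$) is a worthwhile clarification that the paper's deletion section leaves implicit.
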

\begin{proof}
  Let $C$ be a complete multipartite component of $G - M$.
  By Proposition~\ref{lem_v-part}, for each vertex $v\in M$, at most one part of $C$ is nonadjacent to $v$.  Therefore, if $C$ has more than $|M| + k + 4$ parts, then Rule~\ref{rdt_4} is applicable.  Therefore, it has at most $5k + 4$ parts.  On the other hand, since Rule~\ref{rdt_3} is not applicable, the number of vertices in each part is at most $k+1$.  Therefore, $|V(C)| \le (k+1)\cdot (5k+4)$.
By Proposition~\ref{pps_v-clique} and Proposition~\ref{M_partitionbig}, a vertex in $M$ can be adjacent to at most one  complete multipartite component of $G - M$.  Therefore, the total number of complete multipartite components of $G - M$ is $4 k$.  The lemma follows.
\end{proof}

\subsection{Triangle-free components}
In the following, we assume that Rule~\ref{rdt_2} is not applicable.
We mark some vertices from each of the triangle-free components that should be preserved, and then remove all the unmarked vertices.
Recall that a triangle-free component of $G - M$ is of type \textsc{i} or type \textsc{ii} depending on whether it forms a triangle with some vertex in $M$.

The following simple observation is a consequence of Proposition~\ref{M_partitionbig} and the definition of \tone{} triangle-free components.
\begin{corollary}\label{lem_M-partition}
If a vertex in $M$ is adjacent to the triangle-free components of $G - M$, then either it is adjacent to precisely one \tone{} triangle-free component, or it is adjacent to only \ttwo{} triangle-free components.
\end{corollary}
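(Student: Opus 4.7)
My plan is to case-split on whether the given vertex $v \in M$ itself forms a triangle with some triangle-free component of $G - M$, then invoke Proposition~\ref{M_partitionbig} in each case. In the first case, where $v$ forms a triangle with a triangle-free component $C$, the component $C$ is of type \textsc{i} by definition, and Proposition~\ref{M_partitionbig} forces $N(v)\subseteq M \cup V(C)$, so $v$ is adjacent to no other component of $G-M$; this yields the ``precisely one \tone{} triangle-free component'' alternative of the corollary.

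In the second case, where $v$ forms no triangle with any triangle-free component, I would argue that every triangle-free component $v$ is adjacent to must be a \ttwo{} component. Suppose for contradiction $v$ is adjacent to a \tone{} component $C$, and let $u \in M$ witness the defining triangle. By Proposition~\ref{lem:complete-to-triangle-free}, $C$ is complete bipartite with parts $A \uplus B$ and $u$ is adjacent to every vertex of $C$. Since $v$ forms no triangle with $C$, its neighbours in $V(C)$ sit in a single part, say $A$; pick $x \in A \cap N(v)$ and $y \in B$. In the induced subgraph on $\{u,v,x,y\}$ the edges $ux$, $uy$, $xy$, $vx$ are present while $vy$ is absent. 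If $uv \notin E$, this subgraph is a paw; since each of its four edges touches $x$ or $y$ and $\{x,y\} \subseteq V(G)\setminus M$, no paw of the edge-disjoint packing defining $M$ can share an edge with it (any such sharing would force $x$ or $y$ into $M$), so by maximality of the packing the paw itself would be added, again placing $x$ or $y$ into $M$---a contradiction.

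The main obstacle I foresee is the residual subcase $uv \in E$, where the four-vertex subgraph is a diamond rather than a paw and the clean packing argument does not apply directly. I plan to handle it by enlarging the witness set: exploit that Proposition~\ref{M_partitionbig} applied to $u$ forces $N(u)\subseteq M\cup V(C)$, so any further neighbour of $v$ (say in another component of $G-M$) is nonadjacent to $u$, and combine such a neighbour with $\{u,v,x\}$ to produce a different paw whose edges again all touch vertices outside $M$. This should re-invoke the same packing-maximality contradiction. Ruling out the diamond case cleanly is, I expect, the main technical wrinkle; once it is closed, the two halves above assemble immediately into the stated dichotomy.
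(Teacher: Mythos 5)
Your first case is exactly the paper's intended argument: if $v$ itself forms a triangle with a triangle-free component $C$ of $G-M$, then $C$ is type~\textsc{i} by definition, and Proposition~\ref{M_partitionbig} gives $N(v)\subseteq M\cup V(C)$, so $v$ touches no other component. The paper gives no more than this (it states the corollary as a direct consequence of Proposition~\ref{M_partitionbig} and the type~\textsc{i} definition), and this is also what the corollary is actually used for in Lemma~\ref{lem_nummrk2}: every type~\textsc{i} component $C$ has a ``witness'' $u_C\in M$ that forms a triangle with it, Proposition~\ref{M_partitionbig} makes $u_C$ private to $C$, so $C\mapsto u_C$ is injective and there are at most $|M|\leq 4k$ type~\textsc{i} components.

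Your second case, however, is trying to prove a strictly stronger claim than anything the paper uses, namely that adjacency to a type~\textsc{i} component $C$ forces $v$ to form a triangle with $C$, and the obstacle you flag ($uv\in E$, diamond instead of paw) is genuine and your proposed repair does not close it. The paw you build from a further neighbour $z$ of $v$, namely $\{u,v,x,z\}$ with triangle $uvx$ and pendant $z$, contains the edge $uv$, and both $u$ and $v$ lie in $M$. Since the modulator of Section~4 is the vertex set of a maximal packing of \emph{edge}-disjoint paws, a paw that carries an edge inside $G[M]$ can perfectly well share that edge with an already-packed paw; maximality is not violated, and no contradiction follows. Your own $uv\notin E$ subcase works precisely because there every edge of the offending paw meets $\{x,y\}\subseteq V(G)\setminus M$; that premise breaks as soon as $uv$ is present. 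In fact one can arrange $v$ adjacent to two two-vertex type~\textsc{i} components $C_1,C_2$ with private witnesses $u_1,u_2$ and with $u_1v,u_2v\in E$, and block all the paws through $C_1,C_2$ by packing two edge-disjoint paws through $u_1v$ and $u_2v$ whose vertices all end up in $M$; so ``$v$ adjacent to type-\textsc{i} $C$ implies $v$ forms a triangle with $C$'' is simply not a theorem, and the dichotomy you are after cannot be salvaged along these lines. The takeaway: keep your Case~1 (which is the whole content the paper needs), read the corollary as ``if $v$ forms a triangle with some triangle-free component then that is the unique component it touches, otherwise all triangle-free components it touches that it could make type~\textsc{i} are type~\textsc{ii},'' and do not try to promote mere adjacency to a triangle.
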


By Proposition~\ref{lem:complete-to-triangle-free}, a \tone{} triangle-free component $C$ of $G- M$ is complete bipartite.

\begin{reduction}\label{rule:type-1}
  Let $\cal C$ be all the \tone{} triangle-free components of $G - M$, and let $U = \bigcup_{C\in \cal C} V(C)$.

  \begin{enumerate}[(i)]
  \item For each $S\subseteq M$ with $|S| = 3$ and each $S'\subseteq S$, mark $k+1$ vertices from $\{x\in U\mid N(x)\cap S = S'\}$.
  \item For each $C\in \cal C$ with bipartition $L\uplus R$ do the following.  For each $S\subseteq M$ with $|S| = 2$ and each $S'\subseteq S$, mark $k+3$ vertices from $\{x\in L\mid N(x)\cap S = S'\}$ and $k+3$ vertices from $\{x\in R\mid N(x)\cap S = S'\}$.
  \end{enumerate}
  Delete all the unmarked vertices from $U$.
\end{reduction}

\begin{lemma}\label{lem_mrk_2}
  Rule~\ref{rule:type-1} is safe.
\end{lemma}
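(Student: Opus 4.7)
The only-if direction is immediate: any solution to $(G,k)$ restricts to a solution of the reduced instance. For the if direction, let $E_-$ be a solution to the reduced instance $(G',k)$ obtained by deleting all unmarked vertices from $U$; we wish to show that $G-E_-$ is paw-free. Assume otherwise, and among all paws of $G-E_-$ pick one, $F$, minimizing the number of deleted (i.e., unmarked) vertices it contains. This number must be positive, since any paw of $G-E_-$ with no deleted vertex would already lie in $G'-E_-$, contradicting the choice of $E_-$. Pick a deleted $v\in F$; by Proposition~\ref{lem:complete-to-triangle-free}, $v$ lies in the bipartition $L\uplus R$ of its \tone{} triangle-free component $C$, say in $L$. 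The plan is to find a marked vertex $v'$ such that $(F\setminus\{v\})\cup\{v'\}$ is again a paw of $G-E_-$, which yields a paw with strictly fewer deleted vertices and contradicts minimality. For this, $v'$ must (a) avoid $F\setminus\{v\}$, (b) be $G$-adjacent to the same subset of $F\setminus\{v\}$ as $v$, and (c) be incident to no edge of $E_-$ going into $F\setminus\{v\}$.

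The construction of $v'$ splits according to $|F\cap M|$. If $|F\cap M|=3$, set $S=F\setminus\{v\}\subseteq M$ and invoke Rule~\ref{rule:type-1}(i): among the $k+1$ marked vertices $x\in U$ with $N(x)\cap S=N(v)\cap S$, at most $|E_-|\le k$ are incident to an $E_-$-edge going to $S$, so one qualifies. If $|F\cap M|=2$, let $S=F\cap M$ and let $w$ be the unique non-$M$ vertex of $F\setminus\{v\}$; Rule~\ref{rule:type-1}(ii) furnishes $k+3$ marked vertices on the $L$-side of $C$ sharing $v$'s $S$-pattern, and after excluding up to $k$ hit by $E_-$ and at most one more to avoid $v'=w$, a valid candidate remains. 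Adjacency of $v'$ to $w$ matches that of $v$ automatically: if $w\in R$ both are adjacent to $w$, and in every other case (that is, $w\in L$, or $w$ in a different component of $G-M$) both are nonadjacent. Finally, if $|F\cap M|=1$, let $s$ be the lone $M$-vertex in $F$. Since the packing is nonempty we have $|M|\ge 4$, so we may pick any $s'\in M\setminus\{s\}$ and apply Rule~\ref{rule:type-1}(ii) with $S=\{s,s'\}$: agreement on $S$ implies agreement on $s$, and, after excluding up to $k$ vertices touched by $E_-$ and up to two more to avoid the non-$M$ vertices $w_1,w_2$ of $F\setminus\{v\}$, at least one of the $k+3$ candidates survives.

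It remains to rule out $|F\cap M|=0$, which is the main structural obstacle. Since $v\in L$ has no $M$-neighbor in $F$, every neighbor of $v$ in $F$ lies in $R$. Because $R$ is independent in the bipartite $C$, $v$ cannot be part of a triangle in $F$, so $v$ must be the degree-one vertex of $F$; let $h\in R$ be its unique neighbor in $F$. The remaining two vertices of $F$ are adjacent to $h$, so they lie in $N(h)\subseteq L\cup M$, and since $F\cap M=\emptyset$ they actually lie in $L$; but they are adjacent to each other in the paw, contradicting the independence of $L$. The delicate point in the whole argument is the case $|F\cap M|=1$, where Rule~\ref{rule:type-1}(ii) is indexed by $2$-subsets of $M$; the fix is to augment the singleton with an arbitrary second $s'\in M$, which still pins down the correct $s$-adjacency of $v'$ while the rest of the adjacency to $F\setminus\{v\}$ is forced by $v'$ being on the same side of $C$ as $v$.
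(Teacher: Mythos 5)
Your proof is correct, and it takes a somewhat different organizational route than the paper's.  The paper first bounds the number of deleted vertices in any offending paw $F$ of $G-E_-$ by two (otherwise $F$ would be a paw of $G$ intersecting $M$ in at most one vertex), and then splits into cases by whether $F$ has one or two deleted vertices, replacing all deleted vertices of $F$ simultaneously; the two-deleted-vertex case further splits on whether the two are adjacent.  You instead pick $F$ to minimize the number of deleted vertices it contains and replace a single deleted vertex at a time, which collapses the paper's top-level case split and makes the ``at most two deleted vertices'' bound unnecessary.  You also organize the subcases by $|F\cap M|\in\{0,1,2,3\}$, explicitly ruling out $|F\cap M|=0$ (the paper leaves this implicit) and explicitly handling $|F\cap M|=1$ by padding the singleton with an arbitrary second modulator vertex before invoking part (ii) of the rule (the paper phrases this as matching the adjacency to $F\cap M$ without spelling out that the rule only provides $2$-sets).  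The bookkeeping --- $v'$ must lie on the same side $L$ as $v$, must share $v$'s adjacency pattern to $F\cap M$, must avoid $F\setminus\{v\}$ itself, and must be free of $E_-$-edges into $F\setminus\{v\}$, with $k+3$ (resp.\ $k+1$) candidates sufficing --- is identical in spirit to the paper's.  The minimality reformulation is a genuine, if modest, simplification that avoids the slightly awkward simultaneous double-replacement argument.
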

\begin{proof}
Let $G'$ be the graph obtained after applying Rule~\ref{rule:type-1}.
If $(G,k)$ is a yes-instance, then $(G',k)$ is a yes-instance.  For the other direction, suppose that $(G',k)$ is a yes-instance, with a solution $E_-$.  We prove by contradiction that $G - E_-$ is paw-free as well.  A paw $F$ in $G - E_-$ contains at least one deleted vertex, because $G' - E_-$ is paw-free, and at most two deleted vertices, because otherwise $F$ is a paw of $G$ and should be in the modulator.

Consider first that $F$ contains only one deleted vertex $x$.  Let $C$ be the triangle-free component of $G - M$ containing it.
If all the other three vertices in $F$ are from $M$, then in step (i) we have marked $k+1$ vertices in $C$ that have the same adjacency to $F\setminus \{x\}$ as $x$ in $G$.  Since $|E_-| \le k$, the adjacency between $F\setminus \{x\}$ and at least one of these marked vertex is unchanged.  This vertex forms a paw with $F\backslash \{x\}$ in $G'-E_-$, a contradiction.
Now at most two vertices of $F$ are from $M$.
We may assume without loss of generality that $x\in L$, where $L\uplus R$ is the bipartition of $C$.
In step (ii) we have marked $k+3$ vertices in $L$ that have the same adjacency to $F \cap M$ as $x$; let them be $Q$.  By Prop.~\ref{lem:complete-to-triangle-free}, every vertex in $Q\cup\{x\}$ is adjacent to all vertices in $R$; on the other hand, no vertex in $Q\cup\{x\}$ is adjacent to any vertex in another component of $G - M$ different from $C$.  Therefore, all vertices in $Q\cup\{x\}$ have the same adjacency to $F\setminus L$ in $G$.  Since $|E_-| \le k$, the adjacency between $F\setminus \{x\}$ and at least one vertex in $Q$ is unchanged (noting that $|Q\cap F| \le 2$).  This vertex forms a paw with $F\backslash \{x\}$ in $G' - E_-$, a contradiction.

In the rest, $F$ contains two deleted vertices $x$ and $y$. If $x$ and $y$ are adjacent, then they are from the different parts of some component $C =  L \uplus R$. Without loss of generality, we assume that $x \in L$ and $y \in R$.
Since $|F\cap M| \le 2$, by step (ii), we can find two set $Q_1 \subseteq L$ and $Q_2\subseteq R$ that have the same adjacency to $F\cap M$  as $x$ and $y$ respectively.  Note that $|Q_1| \ge k+3$ and $|Q_2| \ge k+3$. Each vertex in $Q_1$  has the same adjacency to $F\setminus \{x\}$. The situation is similar for $Q_2$ and $F\setminus \{y\}$.  For $i =1, 2$, since $|E_-| \le k$ and $|Q_i\cap F|\le 2$, the adjacency between $F\setminus\{x\}$ and at least one vertex in $Q_i$ is unchanged.  These two vertices form a paw with $F\backslash \{x,y\}$ in $G'-E_-$, a contradiction (because $Q_1\uplus Q_2$ is complete bipartite).
Now that $x$ and $y$ are not adjacent, then they are in the same part or in different components.
Then one of $x$ and $y$ is the degree-one vertex of $F$ and the other is a degree-two vertex of $F$, and we can get that the adjacency of $x$ and $y$ to $F\cap M$ are different. By Prop.\ref{lem:complete-to-triangle-free}, the component(s) containing $x$ and $y$ is complete bipartite, then $x$ and $y$ are adjacent to all vertices in the part that does not contain them in corresponding component. By step (ii), we can find two set $Q_1$ in the part containing $x$ and $Q_2$ in the part containing $y$ that have the same adjacency to $F\cap M$ as $x$ and $y$ respectively. Then $Q_1\neq Q_2$. Since $|E_-| \le k$, $| Q_1\cap F| \le 2$ and  $|Q_2\cap F| \le 2$,  at least one vertex in $Q_1$ and at least one vertex in $Q_2$ are unchanged. These two vertices form a paw with $F\backslash \{x,y\}$ in $G'-E_-$, a contradiction.
\end{proof}

\begin{lemma}\label{lem_nummrk2}
  After Rule~\ref{rule:type-1} is applied, there are at most $O(k^4)$ vertices in all the \tone{} triangle-free components of $G - M$.
\end{lemma}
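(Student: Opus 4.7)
The plan is to bound separately the vertices marked by steps (i) and (ii) of Rule~\ref{rule:type-1}, since every surviving vertex of a \tone{} triangle-free component is marked at one of those steps. The total number of vertices in $U$ after the rule is applied equals the number of marked vertices, so it suffices to show that each step marks $O(k^4)$ vertices.

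First I would pin down the number of \tone{} triangle-free components of $G - M$. By definition, each such component $C$ forms a triangle with some vertex $v \in M$. Proposition~\ref{M_partitionbig} then tells us that every neighbor of $v$ lies in $M \cup V(C)$, so $v$ can be the triangle-triggering vertex for at most one such component. Hence the number of \tone{} triangle-free components is at most $|M| \le 4k$.

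Next I would carry out a direct counting for step (i). The number of three-element subsets $S \subseteq M$ is $\binom{|M|}{3} = O(k^3)$; for each such $S$ there are $2^3 = 8$ choices of $S' \subseteq S$, and for each pair $(S,S')$ at most $k+1$ vertices are marked. This gives $O(k^3) \cdot 8 \cdot (k+1) = O(k^4)$ marked vertices in total. For step (ii), I iterate over the at most $4k$ \tone{} components; for each component and each of its two parts, there are $\binom{|M|}{2} = O(k^2)$ choices for $S$, $2^2 = 4$ choices for $S'$, and at most $k+3$ vertices marked per choice. This gives $4k \cdot 2 \cdot O(k^2) \cdot 4 \cdot (k+3) = O(k^4)$ marked vertices. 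Summing the two contributions yields the claimed $O(k^4)$ bound.

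I do not expect any real obstacle here; the argument is essentially a bookkeeping exercise once the bound $O(k)$ on the number of \tone{} triangle-free components is in place. The only point worth being careful about is attributing each surviving vertex to at least one of the two marking steps, but this is immediate from the statement of Rule~\ref{rule:type-1}, which deletes every unmarked vertex in $U$.
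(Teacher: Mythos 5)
Your proposal is correct and follows essentially the same approach as the paper: both bound step (i) globally by $\binom{|M|}{3}\cdot 8\cdot(k+1) = O(k^4)$, bound step (ii) by $O(k^3)$ per component, and multiply by the $4k$ bound on the number of \tone{} components (which the paper obtains via Corollary~\ref{lem_M-partition}, a direct consequence of Proposition~\ref{M_partitionbig}, exactly as you argue).
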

\begin{proof}
Let $C$ be any \tone{} triangle-free component. For each nonempty vertex set $S'$ with size at most two  in $M$, we mark at most $2^1 \cdot 2^2 \cdot (k+1)$ vertices in $L$ (resp., $R$) of $C$. There are at most $\binom{4k}{1} $ + $\binom{4k}{2} $ these sets in $M$.  The number of marked vertices of $C$ is bounded by $O(k^3)$. By Corollary~\ref{lem_M-partition}, there are at most $4k$ \tone{} triangle-free components. In step (i), there are at most $O(k^4)$ marked vertices. Thus, The number of marked vertices of the \tone{} triangle-free components is bounded by $O(k^4)$.
\end{proof}

Finally, we deal with \ttwo{} triangle-free components of $G- M$.

\begin{reduction}\label{rule:type-2}
Let $\cal C$ be all the \ttwo{} triangle-free components of $G - M$, and let $U = \bigcup_{C\in \cal C} V(C)$.
  \begin{enumerate}[(i)]
  \item For each $S\subseteq M$ with $|S| = 3$ and each $S'\subseteq S$, mark $k+1$ vertices from $\{x\in U\mid N(x)\cap S = S'\}$.
  \item Mark all the vertices in non-trivial components that form a triangle with $M$, and for each of them, mark $k + 1$ of its neighbors in $C$.
  \end{enumerate}
  Delete all the unmarked vertices from $C$.
\end{reduction}

\begin{lemma}\label{lem_mrk_4}
  Rule~\ref{rule:type-2} is safe.
\end{lemma}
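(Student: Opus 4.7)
For the if direction the plan is to show that assuming $E_-$ solves $(G', k)$ (where $G'$ is the reduced graph) and that $G - E_-$ contains a paw $F$, one can always produce a paw in $G' - E_-$, contradicting the validity of $E_-$. The setup observations are that $F$ must meet $D := V(G) \setminus V(G') \subseteq U$, that $|F \cap M| \geq 2$ by the modulator property, and that for every \ttwo{} triangle-free component $C$ one has $|F \cap C| \leq 2$, because otherwise the triangle of $F$ either lies inside $C$ or forces some $v \in M$ to form a triangle with an edge of $C$, exhibiting $C$ as a \tone{} component. Hence the case split is $(|F \cap M|, |F \cap D|) \in \{(3, 1), (2, 1), (2, 2)\}$.

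Case $(3, 1)$ will be handled by step~(i) exactly as in the proof of Lemma~\ref{lem_mrk_2}: writing $x \in F \cap D$ and $S = F \cap M$, step~(i) marks $k + 1$ vertices sharing $x$'s adjacency to $S$, all disjoint from $F$, and at least one of them survives the budget $|E_-| \leq k$ to furnish a replacement paw in $G' - E_-$. For case $(2, 1)$, write $F = \{x, w, m_1, m_2\}$ with $x \in D$; a short paw-structure analysis using $|F \cap C| \leq 2$ shows that the triangle of $F$ must be $\{m_1, m_2, x\}$ or $\{m_1, m_2, w\}$, since any other triangle would place two adjacent non-$M$ vertices together with an $M$-vertex and exhibit the enclosing component as \tone{}. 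If the triangle is $\{m_1, m_2, x\}$, then step~(ii) would have already marked $x$ unless $C_x = \{x\}$ is trivial, in which case $x$'s neighbors all lie in $M$; picking any $m_3 \in M \setminus \{m_1, m_2\}$ and applying step~(i) to the pair $(S, S') = (\{m_1, m_2, m_3\},\, N_G(x) \cap S)$ yields $k + 1$ marked candidates, and any surviving candidate $y'$ satisfies $y' w \notin E(G)$ either automatically (when $w \notin U$) or by the \ttwo{} property of $C_w$ (since $y' \in C_w$ adjacent to both $m_1$ and $w$ would produce a triangle $\{y', w, m_1\}$ exhibiting $C_w$ as \tone{}).

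The sub-case with triangle $\{m_1, m_2, w\}$ will split on whether the apex of the paw is an $M$-vertex (handled by the same step~(i) argument as above) or $w$ itself; in the latter $x$ is a neighbor of $w$ in the non-trivial component $C_w = C_x$, and step~(ii)'s mark of $k + 1$ of $w$'s neighbors in $C_w$ will furnish the replacement, each such neighbor being automatically non-adjacent to $m_1$ and $m_2$ by the \ttwo{} property. Case $(2, 2)$ then reduces to two parallel copies of the above: both deleted vertices will have to inhabit different trivial components (otherwise the shared non-trivial component would have its triangle-former marked by step~(ii)), and two independent applications of step~(i) give disjoint pools $Q_x, Q_y$ of $k + 1$ candidates whose combined survival under $|E_-| \leq k$ leaves a replacement pair, again non-adjacent by \ttwo{} rigidity. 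The main obstacle I expect is the $(2, 1)$ sub-case where the non-deleted pendant $w$ lives in $U$: step~(i)'s marks could in principle concentrate entirely inside $w$'s component, but the \ttwo{} rigidity prevents any such mark from being adjacent to $w$, and this is precisely what makes the replacement work.
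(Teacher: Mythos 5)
Your proof follows the same high-level strategy as the paper (replace the deleted vertices of a putative paw $F$ in $G - E_-$ by surviving marked vertices), but your case analysis on $(|F\cap M|,\,|F\cap D|)$ is more explicit than the paper's split on $|F\cap U|$, and it actually fills a gap. In the paper's $|F\cap U|=2$ case, the triangle-vertex $x$ is claimed to be marked by step~(ii); that is only guaranteed when $x$ lies in a \emph{non-trivial} component, and nothing rules out $x$ sitting in a trivial \ttwo{} component (which is exactly what happens when $x$ and $y$ are non-adjacent and $x$ is the triangle vertex). In that situation $x$ can be unmarked and deleted, and the paper's replacement of only $y$ leaves a paw referencing the vanished $x$. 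You notice this, observe that a trivial-component $x$ has all its neighbors in $M$, and invoke step~(i) on a size-three superset $S\supseteq F\cap M$ to find a replacement, then certify non-adjacency to the pendant $w$ via the \ttwo{} property (no triangle may use an edge of a \ttwo{} component). The same mechanism handles the apex-is-$M$-vertex and apex-is-$w$ subcases and the two-deletion case $(2,2)$, so the rule's safety follows.

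Two small imprecisions worth flagging. First, the opening assertion that $|F\cap M|\ge 2$ ``by the modulator property'' is not literally available, since $F$ is a paw of $G-E_-$ rather than of $G$; the correct derivation is the one you also give, that $|F\cap U|\le 2$ because any edge inside a \ttwo{} component cannot lie in a triangle (and one must additionally check that $F$ cannot contain vertices from non-$\mathcal C$ components of $G-M$ when $|F\cap U|\ge 1$, which follows from Proposition~\ref{M_partitionbig} and connectivity of the paw). Second, in case $(2,2)$ the pendant vertex need not be in a trivial component --- it is merely a non-marked vertex of some \ttwo{} component, possibly non-trivial --- but since you handle it through step~(i) rather than step~(ii), the argument goes through regardless.
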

\begin{proof}
Let $G'$ be the graph obtained after applying Rule~\ref{rule:type-2}. If $(G,k)$ is a yes-instance, then $(G',k)$ is a yes-instance. For the other direction, suppose that $(G',k)$ is a yes-instance, with a solution $E_-$.  We prove by contradiction that $G - E_-$ is paw-free as well.  A paw $F$ in $G - E_-$ contains at least one deleted vertex since $G' - E_-$ is paw-free.

By the definition of \ttwo{} triangle-free components, no triangle contains an edge in $\cal C$, implying that  the triangle $t$ in $F$ contains no edge in $\cal C$.  Note that if $F$ contains three vertices in $\cal C$, then  $t$ must contain an edge in $\cal C$, a contradiction.  If $F$ contains precisely one vertex $v$ in $\cal C$, then  by step (i), we can find a vertex $v'$ in $G'- E_-$ such that $v'$ has the same adjacency to $F\cap M$ as $v$, implying that $F\setminus \{v\} \cup \{v'\}$ in $G' - E_-$ forms a paw. If $F$ contains two vertices $x$ and $y$ in $\cal C$ , then either $x$ or $y$ is in a  triangle $t$ of $F$. Without loss of generality, we assume that $x$ is in $t$, implying that $x$ is marked in step (ii).  If $y$ is adjacent to $x$, then by step (ii), there are $k+1$ marked vertices adjacent to $x$; let them be $Q$.  The vertices in $Q$ are not adjacent to any vertex in $F\cap M$ since no triangle in $G$ contains an edge in $\cal C$. Then, each vertex in $Q$ forms a paw with $F\setminus \{y\}$ in $G'$. Since $|E_-| \le k$, there is a vertex $v'$ in $Q$ forms a paw with $F\setminus \{y\}$ in $G' - E_-$, a contradiction. If $x$ is not adjacent to $y$, by step (i), there are $k+1$ marked vertices $Q'$ having the same adjacency to $F\cap M$ as $y$ such that  each vertex in $Q'$ is not adjacent to $x$ since no triangle in $G$ contains an edge in $\cal C$. Then, each vertex in $Q'$ forms a paw with $F\setminus \{y\}$ in $G'$. Since $|E_-| \le k$, there is a vertex $v'$ in $Q'$ forms a paw with $F\setminus \{y\}$ in $G' - E_-$, a contradiction. This concludes the proof.
\end{proof}

\begin{lemma}\label{lem_numoft2}
  After Rule~\ref{rule:type-2} is applied, there are at most $O(k^4)$ vertices in all the \ttwo{} triangle-free components of $G - M$.
\end{lemma}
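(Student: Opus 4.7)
The plan is to bound the number of vertices marked by each of the two steps of Rule~\ref{rule:type-2} separately, since after the rule every unmarked vertex of a \ttwo{} triangle-free component of $G-M$ is deleted.

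Step (i) is handled by a direct count: there are at most $\binom{|M|}{3}\le\binom{4k}{3}$ choices of $S$, each with $2^3=8$ subsets $S'$, and $k+1$ vertices are marked for each pair $(S,S')$. This already contributes only $O(k^4)$ marked vertices.

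For step (ii), let $T$ denote the set of vertices $v$ that lie in non-trivial \ttwo{} triangle-free components and participate in a triangle with vertices of $M$. Since the component $C$ containing such a $v$ is triangle-free, any triangle on $v$ must use at least two vertices of $M$; equivalently, there is an edge $uw$ of $G[M]$ with $u,w\in N(v)$. Moreover, by the definition of \ttwo{}, no neighbor of $v$ inside $C$ can be adjacent to $u$ or $w$ (otherwise that neighbor together with $v$ and the appropriate endpoint of $uw$ would exhibit a single vertex of $M$ forming a triangle with an edge of $C$). The key claim is that, for every edge $uw$ of $G[M]$, the set $T\cap N(u)\cap N(w)$ contains at most $k$ vertices. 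Otherwise, choosing $k+1$ such vertices $v_1,\dots,v_{k+1}$ and, for each $v_i$, a neighbor $x_i$ inside its component (which exists since the component is non-trivial and connected), each subset $\{u,v_i,w,x_i\}$ induces a paw, and any two of these paws share exactly the edge $uw$. This would contradict the assumption that Rule~\ref{rdt_2} is no longer applicable.

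Consequently $|T|\le k\cdot\binom{|M|}{2}=O(k^3)$, and step (ii) marks at most $|T|$ triangle-forming vertices together with at most $k+1$ neighbors of each, for a total of $O(k^4)$ marked vertices. Combining the two contributions gives the claimed $O(k^4)$ bound. The main obstacle is the bound on $|T|$: one must carefully verify that the $k+1$ paws obtained for a fixed edge $uw$ of $G[M]$ are pairwise edge-intersecting only in $uw$, which uses the distinctness of the $v_i$'s to make the edges $uv_i,wv_i,v_ix_i$ all differ between paws, and uses the \ttwo{} property to forbid spurious edges between the $x_i$'s and $\{u,w\}$.
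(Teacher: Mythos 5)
Your proposal is correct and follows essentially the same approach as the paper: bound step (i) by direct counting, and bound the triangle-forming vertices in step (ii) by showing that, for each edge $uw$ of $G[M]$, at most $k$ such vertices can be common neighbors of $u$ and $w$, via a packing of paws sharing only $uw$ and invoking Rule~\ref{rdt_2}. You are actually slightly more explicit than the paper on the edge-disjointness check (noting that the $x_i$'s are nonadjacent to $u,w$ by the \ttwo{} property), though you could also note that the $v_i$'s are pairwise nonadjacent for the same reason, which rules out $v_ix_i=v_jx_j$ when $v_i=x_j$, $x_i=v_j$; this subtlety is elided in the paper as well.
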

\begin{proof}
In step (i), there are at most $O(k^4)$ marked vertices.  For an edge $e$ in $M$,   we argue that  at most $k$ marked vertices not in trivial components, say $Q$, can form a triangle with $e$. Suppose that $|Q|>k$. Since they are in non-trivial components, for each vertex $v$ in $Q$, there is a vertex $v'$ adjacent to $v$.  By the definition of the  \ttwo{} triangle-free components, $v'$ is not incident to $e$ in $G$. Then, there is a paw formed by $e$, $v'$ and $v$. Furthermore, any two vertices in $Q$ participate two paws sharing only $e$, implying that there are $k+1$ paws sharing only $e$, contradicting that  Rule~\ref{rdt_2} is applied exhaustively. Since there are at most $\binom{4k}{2}$ edges in $G[M]$, the number of vertices marked by step (ii) is at most $\binom{4k}{2} \cdot k \cdot (k+2) $. This concludes the proof.
\end{proof}

\begin{proof}[Proof of Theorem~\ref{thm:deletion}]
  Let $(G,k)$ be a yes-instance on which none of Rules~\ref{rdt_2}--\ref{rule:type-2} is applicable, and let $M$ be the modulator.
  By Lemma~\ref{lem_numofc.m.}, the number of vertices in the complete multipartite components of  $G-M$ is bounded by $O(k^3)$. By Lemmas~\ref{lem_nummrk2} and~\ref{lem_numoft2}, the number of vertices in the triangle-free components of $G-M$ is bounded by $O(k^4)$.  Thus, the total number of vertices is $O(k^4)$.
\end{proof}

\end{document}